\pdfoutput=1
\documentclass[11pt]{article}
\usepackage{subfig}
\usepackage{amssymb,amsmath,amsthm}
\usepackage{graphics,epsfig}
\usepackage{algorithmic}
\usepackage{algorithm}
\usepackage[margin=1 in]{geometry}

\usepackage{mathtools}
\usepackage{cuted}
\usepackage{hhline}

\usepackage{multirow}
\usepackage{array}

\newtheorem{lem}{Lemma}[section]

\newtheorem{thm}[lem]{Theorem}

\newtheorem{prop}[lem]{Proposition}
\newtheorem{rem}[lem]{Remark}
\newcommand{\keywords}{\par\noindent{\textbf{Key words: }}}

\newcommand{\cmark}{\ensuremath{\checkmark}}
\newcommand{\xmark}{\ensuremath{\times}}

\title{Constant-Factor Approximation for Non-Splitting Coflow Scheduling in Heterogeneous Parallel Networks via Iterative Relaxation}
\author{Chi-Yeh~Chen 
\\ Department of Computer Science and Information
Engineering, \\ National Cheng Kung University, \\
Taiwan, ROC. \\
chency@csie.ncku.edu.tw.}

\begin{document}

\maketitle
\begin{abstract}
Coflows capture the collective communication requirements of data-parallel applications, and scheduling them well is essential to the performance of modern data centers. We study non-splitting coflow scheduling in a heterogeneous parallel network whose $m$ network cores may be Electronic Packet Switches (EPS), not-all-stop Optical Circuit Switches (OCS), or all-stop OCS, where every flow must be transmitted entirely through a single core. Previous approximation algorithms for this setting have guarantees that grow linearly with the number of cores $m$ or with the maximum flow degree $\tau$ of a port, and therefore become weak as data centers deploy more cores. The source of this dependence is the rounding step: rounding each flow to the core holding its largest fractional value can inflate the load of a port by a factor of $m$. We replace this step with an iterative relaxation rounding scheme that handles the input-port and output-port capacity constraints simultaneously, and we prove that the resulting load of every port on every core is at most $\min\{\tau, 3\}$ times the optimal value of the linear programming relaxation. For minimizing the makespan, this yields approximation ratios of $\min\{\tau, 3\}$, $2\min\{\tau, 3\}$, and $4\min\{\tau, 3\}$ for pure EPS, pure not-all-stop OCS, and pure all-stop OCS networks, respectively. For minimizing the total weighted coflow completion time, we obtain approximation ratios of $8\min\{\tau, 4\}$, $16\min\{\tau, 4\}$, and $32\min\{\tau, 4\}$ for the same three architectures. In a hybrid network, the guarantee is governed by the weakest architecture present, so the overall ratios are at most $4\min\{\tau,3\}\le 12$ and $32\min\{\tau,4\}\le 128$, respectively. Unlike all previous guarantees for non-splitting coflow scheduling in heterogeneous parallel networks, these bounds are absolute constants that do not grow with the number of cores $m$ or with the flow degree $\tau$. We also show that the integrality gap of the underlying linear program is at least $2$.

\keywords{Scheduling algorithms, approximation algorithms, coflow, iterative relaxation, datacenter network, heterogeneous parallel network.}
\end{abstract}

\section{Introduction}\label{sec:introduction}
In data-parallel frameworks such as MapReduce~\cite{Dean2008}, Hadoop~\cite{Shvachko2010, borthakur2007hadoop}, Dryad~\cite{isard2007dryad}, and Spark~\cite{zaharia2010spark}, a computation stage produces a large amount of intermediate data that must be shuffled between two groups of machines before the next stage can begin. The coflow abstraction~\cite{Chowdhury2012} groups the parallel flows of such a communication stage into a single unit. Because the next stage cannot start until all of its input has arrived, the quantity that matters to the application is the completion time of the last flow of the coflow~\cite{Chowdhury2012, shafiee2018improved}, and a number of systems have shown that scheduling the network at the granularity of coflows yields substantial gains~\cite{Chowdhury2014, Chowdhury2015, Zhang2016, Agarwal2018}.

The switching fabric of data centers is also becoming increasingly heterogeneous. An Electronic Packet Switch (EPS) multiplexes packets of many flows on a link and allows flows to be preempted. An Optical Circuit Switch (OCS)~\cite{Zhang2021, Tan2021, Li2022} offers much higher bandwidth at lower energy cost, but each of its ports can maintain only one circuit at any moment, and every change of the circuit configuration costs a reconfiguration delay~\cite{Zhang2021}. Depending on how reconfiguration is carried out, an OCS is either all-stop, in which case all circuits pause during a reconfiguration, or not-all-stop, in which case only the ports being reconfigured are interrupted. A realistic data center therefore contains several network cores of different types and speeds, and a coflow scheduler must decide both which core carries each flow and when the flow is transmitted.

We consider a set of coflows, each described by a demand matrix $D^{(k)}=\left(d_{ijk}\right)_{i,j=1}^{N}$, that must be transmitted through $m$ heterogeneous network cores, where core $q$ has link rate $r_{q}$. Flows are not allowed to be split: every flow is carried entirely by one core, although different flows of the same coflow may use different cores. Splitting a flow across fabrics causes packet reordering and other overhead that is undesirable in practice~\cite{chen2025nonsplitting}. We study two objectives: the makespan $\max_{k}\{C_{k}\}$ and the total weighted coflow completion time $\sum_{k} w_{k}C_{k}$, where $C_{k}$ and $w_{k}$ denote the completion time and the weight of coflow $k$.

\begin{table*}[htbp]
\centering
\caption{Comparison of related works (adapted and extended from~\cite{10411848, 11200997, chen2025nonsplitting}).}
\label{tab:network_comparison}
\begin{tabular}{|l|c|c|c|c|c|c|}
\hline
\multicolumn{1}{|c|}{\multirow{2}{*}{Works}} & \multirow{2}{*}{EPS} & \multicolumn{2}{c|}{OCS} & No. of & Performance & Ratio independent \\ \cline{3-4}
\multicolumn{1}{|c|}{} & & \textit{all-stop} & \textit{not-all-stop} & cores & guarantee & of $m$ and $\tau$ \\ \hline
\cite{chowdhury2011managing, Chowdhury2014, Chowdhury2015, Zhang2016, zhao2015rapier} & \cmark & \xmark & \xmark & 1 & \xmark & -- \\ \hline
\cite{Qiu2015, khuller2016brief, shafiee2018improved, Li2016} & \cmark & \xmark & \xmark & 1 & \cmark & -- \\ \hline
\cite{Xu2018} & \xmark & \cmark & \xmark & 1 & \xmark & -- \\ \hline
\cite{Zhang2019, Tan2021} & \xmark & \cmark & \xmark & 1 & \cmark & -- \\ \hline
\cite{huang2016, Zhang2021} & \xmark & \xmark & \cmark & 1 & \cmark & -- \\ \hline
\cite{Li2022, 10217994} & \cmark & \xmark & \cmark & 2 & \xmark & -- \\ \hline
\cite{Liu2015} & \cmark & \cmark & \xmark & 2 & \xmark & -- \\ \hline
\cite{10411848} & \cmark & \cmark & \xmark & 2 & \cmark & -- \\ \hline
\cite{11200997} & \cmark & \cmark & \cmark & 2 & \cmark & -- \\ \hline
\cite{CHEN2023104752,chen2023efficient1, chen2023efficient2,Huang2020} & \cmark  & \xmark & \xmark & $m$ & \cmark & -- \\ \hline
\cite{chen2025nonsplitting} & \cmark & \cmark & \cmark & $m$ & \cmark & \xmark \\ \hline
\textbf{This work} & \cmark & \cmark & \cmark & $m$ & \cmark & \cmark \\ \hline
\end{tabular}
\end{table*}

\subsection{Related Work}
The coflow abstraction was introduced by Chowdhury and Stoica~\cite{Chowdhury2012}. Since then, coflow scheduling has been studied mainly in two families of networks, namely EPS-based and OCS-based networks, and more recently in networks that mix both. Table~\ref{tab:network_comparison} summarizes the works most closely related to ours; it extends the comparisons given in~\cite{10411848, 11200997, chen2025nonsplitting}.

\subsubsection{EPS-based networks}
For the total weighted coflow completion time on a single EPS core, a sequence of works~\cite{Qiu2015, ahmadi2020scheduling, khuller2016brief, shafiee2018improved} improved the approximation ratio from $\frac{64}{3}$ to $4$ when all release times are zero, and from $\frac{76}{3}$ to $5$ for arbitrary release times. On identical parallel networks, Chen~\cite{chen2023efficient1} obtained ratios of $6-\frac{2}{m}$ and $5-\frac{2}{m}$ for arbitrary and zero release times, and later extended the approach to coflows with precedence constraints, achieving an $O(\chi)$-approximation~\cite{chen2023efficient2}, where $\chi$ is the number of coflows on the longest path of the precedence graph. On heterogeneous parallel networks, Huang \textit{et al.}~\cite{Huang2020} gave an $O(m)$-approximation for the makespan of a single coflow, which Chen~\cite{CHEN2023104752} improved to $O\left(\frac{\log m}{\log \log m}\right)$; the same bound was also obtained for the total weighted completion time~\cite{CHEN2023104752}.

\subsubsection{OCS-based networks}
For all-stop OCS, Tan \textit{et al.}~\cite{Tan2021} designed Reco-Sin and obtained a $2$-approximation for a single coflow and an $8K$-approximation for $K$ coflows. Wang \textit{et al.}~\cite{10411848} gave approximation guarantees for a hybrid-switched network consisting of one EPS and one all-stop OCS. For not-all-stop OCS, Huang \textit{et al.}~\cite{huang2016} proposed Sunflow, which is a constant-factor approximation for a single coflow, and Zhang \textit{et al.}~\cite{Zhang2021} gave a $4$-approximation for multiple coflows. For hybrid electrical-optical fabrics, Li and Shen~\cite{Li2022} and Jiang \textit{et al.}~\cite{10217994} proposed heuristics, and the work in~\cite{11200997} provided guarantees for two cores including both types of OCS. All of the hybrid results with guarantees~\cite{10411848, 11200997} are restricted to $m=2$ cores and rely on splitting flows across the fabrics.

\subsubsection{Non-splitting scheduling on heterogeneous parallel networks}
The work most closely related to ours is~\cite{chen2025nonsplitting}, which gave the first approximation algorithms with provable guarantees for non-splitting coflow scheduling in a hybrid network with an arbitrary number $m\ge 2$ of EPS, not-all-stop OCS, and all-stop OCS cores. The algorithms in~\cite{chen2025nonsplitting} first solve a linear programming relaxation of the flow-to-core assignment and then round each flow to the core with the largest fractional value. Let $\tau$ be the maximum number of flows incident to a single port. For the makespan, the resulting ratios are $\min\{\tau, m\}$, $2\min\{\tau, m\}$, and $2\min\{2\tau-1, m+\tau-1\}$ in pure EPS, pure not-all-stop OCS, and pure all-stop OCS networks, respectively; for the total weighted coflow completion time, they are $8\min\{\tau, 2m\}$, $16\min\{\tau, 2m\}$, and $16\min\{2\tau-1, 2m+\tau-1\}$~\cite{chen2025nonsplitting}. In a hybrid network, the guarantee is determined by the weakest architecture present.

Every one of these ratios grows linearly in $m$ or $\tau$, so they lose their meaning precisely when the number of cores is large. The reason is inherent to the rounding step. The largest of the $m$ fractional values of a flow is only guaranteed to be at least $\frac{1}{m}$, and if many flows of a port are spread evenly over all cores by the linear program and then rounded onto the same core, the load of that port grows by a factor of $m$. Rounding methods based on bipartite matching~\cite{shmoys1993approximation} control the load of one side, but every assignment variable belongs to both an input-port constraint and an output-port constraint, so matching on the input side alone gives no control on the output side~\cite{chen2025nonsplitting}. \textbf{The goal of this paper is to remove the factors $m$ and $\tau$ from the approximation ratios of~\cite{chen2025nonsplitting}.}

\subsection{Our Contributions}
We keep the linear programming relaxations of~\cite{chen2025nonsplitting} and replace their rounding step by an iterative relaxation scheme in the spirit of Lau, Ravi, and Singh~\cite{LauRaviSingh2011}. Our contributions are as follows.
\begin{itemize}
\item \textbf{Iterative relaxation rounding for two families of capacity constraints.} In the assignment linear program, each variable appears in exactly one assignment constraint and in exactly two capacity constraints, one for its input port and one for its output port. We show that every extreme point of the residual linear program either contains an integral variable or contains a capacity constraint whose total fractional deficiency is at most $2$, and that dropping such a constraint costs at most twice the largest transmission time. Consequently, the load of every port on every core after rounding is at most $\min\{\tau, 3\}T^{*}$, where $T^{*}$ is the smallest value for which the relaxation is feasible. The previous bound was $\min\{\tau, m\}T^{*}$~\cite{chen2025nonsplitting}.

\item \textbf{Makespan.} We obtain approximation ratios of
\begin{itemize}
\item $\min\{\tau, 3\}$ in a pure EPS network,
\item $2\min\{\tau, 3\}$ in a pure not-all-stop OCS network, and
\item $4\min\{\tau, 3\}$ in a pure all-stop OCS network.
\end{itemize}
For all-stop OCS, we also remove the additive dependence on $\tau$ caused by the reconfiguration delays by bounding the number of reconfigurations at a port through the port load itself.

\item \textbf{Total weighted coflow completion time.} Applying the new rounding inside each time interval of the interval-indexed relaxation of~\cite{chen2025nonsplitting}, we obtain approximation ratios of
\begin{itemize}
\item $8\min\{\tau, 4\}$ in a pure EPS network,
\item $16\min\{\tau, 4\}$ in a pure not-all-stop OCS network, and
\item $32\min\{\tau, 4\}$ in a pure all-stop OCS network.
\end{itemize}

\item \textbf{Hybrid networks and limits of the relaxation.} In a hybrid network, the overall guarantee equals that of the weakest architecture present; hence the makespan ratio is at most $12$ and the weighted completion time ratio is at most $128$, independently of $m$ and $\tau$. We further show that the integrality gap of the makespan relaxation is at least $2$, so any analysis based on this relaxation cannot bound the port loads by less than $2T^{*}$.
\end{itemize}

\subsection{Organization}
Section~\ref{sec:Preliminaries} describes the model and the notation. Section~\ref{sec:Makespan} presents the iterative relaxation rounding and the algorithm for minimizing the makespan. Section~\ref{sec:Weighted} extends the approach to the total weighted coflow completion time. Section~\ref{sec:Conclusion} concludes the paper. For completeness, the Birkhoff-von Neumann based schedule for a single EPS core is reviewed in the appendix.

\section{Notation and Preliminaries}\label{sec:Preliminaries}
We adopt the network and traffic model of~\cite{chen2025nonsplitting}; Table~\ref{tab:notations} lists the notation used throughout the paper.

\begin{table}[ht]
\caption{Notation and terminology.}
    \centering
        \begin{tabular}{||c|p{4.6in}||}
    \hline
     $N$      & Number of input ports (and of output ports) of each core.         \\
    \hline
		$\mathcal{M}$, $m$  & Set of network cores and its size.\\
    \hline
		$\mathcal{M}_{e}$, $\mathcal{M}_{o}^{1}$, $\mathcal{M}_{o}^{2}$  & Sets of EPS cores, all-stop OCS cores, and not-all-stop OCS cores.\\
    \hline
     $\mathcal{I}, \mathcal{J}$ & Sets of source servers (input ports) and destination servers (output ports).         \\
    \hline
     $\mathcal{K}$ & Set of coflows.         \\
    \hline
     $n$      & Number of flows.         \\
    \hline
     $D^{(k)}$, $d_{ijk}$     & Demand matrix of coflow $k$; size of the flow of coflow $k$ from input $i$ to output $j$. \\
    \hline
		 $\mathcal{F}$, $\mathcal{F}_{i}$, $\mathcal{F}_{j}$ & Set of all flows; flows with source $i$; flows with destination $j$. \\
		\hline
		$r_{q}$, $\delta_{q}$	& Link rate of core $q$; reconfiguration delay of an OCS core $q$. \\
		\hline
		$p_{ijkq}$	& Transmission time of flow $(i, j, k)$ on core $q$. \\
		\hline
		$\tau$  & Maximum flow degree of a port, $\max\left\{\max_{i} |\mathcal{F}_{i}|,\max_{j} |\mathcal{F}_{j}|\right\}$. \\
    \hline
		 $r_k$, $w_{k}$, $C_k$     & Release time, weight, and completion time of coflow $k$.   \\
    \hline
     $C_{ijk}$ & Completion time of flow $(i, j, k)$. \\
    \hline
		$t_{l}$ & Interval end point $t_{l}=2^l$ for $l=0, 1, \ldots, L$. \\
		\hline
        \end{tabular}
    \label{tab:notations}
\end{table}

\subsection{Network Model}
The network consists of a set $\mathcal{M}$ of $m$ non-blocking $N \times N$ switches, called network cores, partitioned as $\mathcal{M} = \mathcal{M}_{e} \cup \mathcal{M}_{o}^{1} \cup \mathcal{M}_{o}^{2}$ with $m = m_{e} + m_{o}^{1} + m_{o}^{2}$, where $\mathcal{M}_{e}$ contains $m_{e}$ EPS cores, $\mathcal{M}_{o}^{1}$ contains $m_{o}^{1}$ all-stop OCS cores, and $\mathcal{M}_{o}^{2}$ contains $m_{o}^{2}$ not-all-stop OCS cores. Every core has the same $N$ input ports, attached to the source servers $\mathcal{I}=\{1,\ldots,N\}$, and the same $N$ output ports, attached to the destination servers $\mathcal{J}=\{1,\ldots,N\}$; thus each source-destination pair is connected by $m$ parallel links, one through each core, and each core can be viewed as a bipartite graph between $\mathcal{I}$ and $\mathcal{J}$. All links of core $q$ have rate $r_{q}$.

An EPS core multiplexes packets of different flows on a port. An OCS core is subject to the port constraint: at any time, each input port and each output port takes part in at most one circuit. Changing the circuit configuration of an OCS core $q$ takes a fixed reconfiguration delay $\delta_{q}$. In an all-stop OCS, all transmissions of the core are suspended during a reconfiguration; in a not-all-stop OCS, only the ports whose circuits change are interrupted.

\subsection{Traffic Model}
Coflow $k\in\mathcal{K}$ is described by a demand matrix $D^{(k)}=\left(d_{ijk}\right)_{i,j=1}^{N}$, where $d_{ijk}$ is the amount of data that coflow $k$ sends from input $i$ to output $j$. The triple $(i,j,k)$ with $d_{ijk}>0$ is called a flow, and as in~\cite{Qiu2015} all flows of a coflow are released together at time $r_{k}$. We write
\begin{itemize}
\item $\mathcal{F}=\left\{(i, j, k) \mid d_{ijk}>0, i \in \mathcal{I}, j \in \mathcal{J}, k \in \mathcal{K} \right\}$,
\item $\mathcal{F}_{i}=\left\{(i, j, k)\in\mathcal{F} \mid j \in \mathcal{J}, k \in \mathcal{K} \right\}$ for $i\in\mathcal{I}$, and
\item $\mathcal{F}_{j}=\left\{(i, j, k)\in\mathcal{F} \mid i \in \mathcal{I}, k \in \mathcal{K} \right\}$ for $j\in\mathcal{J}$,
\end{itemize}
and let $n=|\mathcal{F}|$. Flows are indivisible: every flow must be transmitted entirely through one core, whereas different flows of the same coflow may use different cores.

The transmission time of flow $(i,j,k)\in\mathcal{F}$ on core $q$ is
\begin{equation*}
p_{ijkq} =
\begin{cases}
\dfrac{d_{ijk}}{r_{q}}, & \text{if } q\in \mathcal{M}_{e},\\[2mm]
\dfrac{d_{ijk}}{r_{q}} + \delta_{q}, & \text{if } q\in \mathcal{M}_{o}^{1} \cup \mathcal{M}_{o}^{2},
\end{cases}
\end{equation*}
that is, on an OCS core every flow is charged one reconfiguration delay for setting up its circuit. Without loss of generality, we scale time so that $p_{ijkq}\geq 1$ for all flows and cores. For a set of flows assigned to core $q$, the \emph{load} of a port on core $q$ is the sum of the transmission times on core $q$ of the assigned flows incident to that port.

\subsection{Objectives}
For the makespan, all coflows arrive at a common time $t^{arr}$, and, with $t^{F}_{ijk}$ denoting the time at which flow $(i,j,k)$ finishes, we minimize
\begin{equation*}
C_{\max} = \max_{(i,j,k)\in \mathcal{F}} \left(t^{F}_{ijk} - t^{arr} \right).
\end{equation*}
For the second objective, coflow $k$ is released at time $r_{k}\ge 0$, its completion time $C_{k}$ is the completion time of its last flow, and we minimize $\sum_{k \in \mathcal{K}} w_{k}C_{k}$.

\section{Approximation Algorithm for Minimizing the Makespan}\label{sec:Makespan}
Like the algorithms of~\cite{chen2025nonsplitting}, our algorithm works in two stages. The first stage assigns every flow to a core by rounding a fractional solution of a linear program, and the second stage schedules, independently on each core, the flows that were assigned to it. The second stage is the same as in~\cite{chen2025nonsplitting}; the novelty lies in the rounding of the first stage.

\subsection{The Assignment Relaxation and the Bisection Procedure of~\cite{chen2025nonsplitting}}
For a flow $(i,j,k)\in\mathcal{F}$ and a core $q\in\mathcal{M}$, let $x_{ijkq}\in\{0,1\}$ indicate whether the flow is transmitted through core $q$. For a target makespan $T$, the following linear program, introduced in~\cite{chen2025nonsplitting}, relaxes the assignment problem:
\begin{subequations}\label{coflow:interval}
\begin{align}
& \text{LP($T$):}  &&      &   & \tag{\ref{coflow:interval}} \\
&  && \sum_{q\in \mathcal{M}} x_{ijkq} = 1, && \forall (i, j, k)\in \mathcal{F}, \label{interval:a} \\
&  && \sum_{(i, j, k) \in \mathcal{F}_{i}} p_{ijkq}\, x_{ijkq} \leq T, && \forall i\in \mathcal{I}, \forall q\in \mathcal{M}, \label{interval:b} \\
&  && \sum_{(i, j, k) \in \mathcal{F}_{j}} p_{ijkq}\, x_{ijkq} \leq T, && \forall j\in \mathcal{J}, \forall q\in \mathcal{M}, \label{interval:c} \\
&  && x_{ijkq} \geq 0, && \forall (i, j, k)\in \mathcal{F}, \forall q\in \mathcal{M}, \label{interval:d} \\
&  && x_{ijkq} = 0, && \forall (i, j, k)\in \mathcal{F}, \forall q\in \mathcal{M} \text{ with } p_{ijkq}>T. \label{interval:e}
\end{align}
\end{subequations}
Constraint~(\ref{interval:a}) assigns each flow to one core, constraints~(\ref{interval:b}) and~(\ref{interval:c}) limit the load of every input port and every output port on every core to $T$, and constraint~(\ref{interval:e}) forbids a core on which the flow alone would take longer than $T$. In any feasible non-splitting schedule with makespan $T$, the load of every port on every core is at most $T$ and every flow is carried by a core on which it takes at most $T$ time units, so the schedule induces an integral solution of LP($T$). Hence LP($T$) is feasible for $T=\mathrm{OPT}$, the optimal makespan.

Because constraint~(\ref{interval:e}) depends on $T$, the smallest feasible $T$ is not computed by a single linear program. Following~\cite{chen2025nonsplitting}, we use bisection (Algorithm~\ref{Alg1}). The initial upper bound $u$ is the largest port load obtained when every flow uses its slowest core, for which LP($u$) is clearly feasible, and the initial lower bound $l$ is the largest port demand divided by the total rate of all cores. In each step, the feasibility of LP($d$) is tested for the midpoint $d$, and the interval is halved accordingly. We denote by $T^{*}$ the value returned by the procedure. Since LP(OPT) is feasible, we have
\begin{equation}\label{eq:Tstar}
T^{*} \leq \mathrm{OPT}.
\end{equation}
A core $q$ with $\delta_{q}\geq T^{*}$ cannot carry any flow in LP($T^{*}$) and is discarded. Thus, we may assume $\delta_{q}<T^{*}$ for every remaining OCS core.

\begin{algorithm}
\caption{Bisection procedure~\cite{chen2025nonsplitting}}
    \begin{algorithmic}[1]
				\STATE $u\leftarrow \max\left\{\max_{i\in \mathcal{I}} \sum_{(i,j,k)\in\mathcal{F}_{i}}\max_{q}p_{ijkq},\ \max_{j\in \mathcal{J}} \sum_{(i,j,k)\in\mathcal{F}_{j}}\max_{q}p_{ijkq}\right\}$
				\STATE $\rho\leftarrow \max \left\{\max_{i}\sum_{(i,j,k)\in\mathcal{F}_{i}}d_{ijk},\ \max_{j}\sum_{(i,j,k)\in\mathcal{F}_{j}}d_{ijk}\right\}$
				\STATE $l\leftarrow \rho/\sum_{q\in\mathcal{M}} r_{q}$
				\WHILE{$u\neq l$}
							 \STATE $d\leftarrow\left\lfloor \frac{u+l}{2}\right\rfloor$
							 \IF{LP($d$) is feasible}
							     \STATE $u\leftarrow d$
							 \ELSE
							     \STATE $l\leftarrow d+1$
							 \ENDIF
				\ENDWHILE
				\STATE \textbf{return} $T^{*}\leftarrow u$
   \end{algorithmic}
\label{Alg1}
\end{algorithm}

In~\cite{chen2025nonsplitting}, a fractional solution of LP($T^{*}$) is rounded by sending every flow to the core $q^{*}=\arg\max_{q} x_{ijkq}$. Since $x_{ijkq^{*}}\geq \frac{1}{m}$, the load of a port grows by at most a factor of $m$, which gives the bound $\min\{\tau,m\}T^{*}$. The example in which the flows of a port are spread evenly over the $m$ cores and then all rounded onto one core shows that this analysis cannot be improved. We now describe a rounding procedure that avoids this loss.

\subsection{Iterative Relaxation Rounding}
It is convenient to describe the rounding for a general assignment polytope, since the same procedure will be used again in Section~\ref{sec:Weighted}. Let $\bar{\mathcal{F}}\subseteq\mathcal{F}$ be a set of flows, let $T>0$, and let $\mathcal{S}\subseteq \bar{\mathcal{F}}\times\mathcal{M}$ be a set of admissible flow-core pairs such that
\begin{equation}\label{eq:pmax}
p_{ijkq}\leq T \quad \text{for every } ((i,j,k),q)\in\mathcal{S}.
\end{equation}
We call a pair $e=((i,j,k),q)\in\mathcal{S}$ an \emph{edge}, write $p_{e}=p_{ijkq}$ and $x_{e}=x_{ijkq}$, and say that $e$ belongs to flow $(i,j,k)$. For every input port $i$ and core $q$, there is a capacity constraint $c=(i,q)$, and for every output port $j$ and core $q$, there is a capacity constraint $c=(j,q)$. An edge $((i,j,k),q)$ belongs to exactly two capacity constraints, namely $(i,q)$ and $(j,q)$. Let $\mathcal{C}$ be the set of all $2Nm$ capacity constraints. The polytope
\begin{equation}\label{eq:P}
\mathcal{P}(\bar{\mathcal{F}},\mathcal{S},T)=\left\{ \textbf{x}\in\mathbb{R}_{\geq 0}^{\mathcal{S}} \ \middle|\ \sum_{e\in\mathcal{S}:\, e \text{ belongs to } f} x_{e}=1\ \ \forall f\in\bar{\mathcal{F}};\ \ \sum_{e\in\mathcal{S}:\, e \in c} p_{e}x_{e}\leq T\ \ \forall c\in\mathcal{C}\right\}
\end{equation}
is exactly the feasible region of LP($T$) when $\bar{\mathcal{F}}=\mathcal{F}$ and $\mathcal{S}=\{((i,j,k),q)\mid p_{ijkq}\leq T\}$.

The procedure (Algorithm~\ref{AlgIR}) maintains a set $\mathcal{F}'$ of unassigned flows, a set $\mathcal{S}'$ of edges still available to them, a set $\mathcal{C}'\subseteq\mathcal{C}$ of capacity constraints still enforced, and, for each $c\in\mathcal{C}'$, a residual capacity $T_{c}$, which equals $T$ minus the load already placed on $c$. The residual linear program is
\begin{subequations}\label{RLP}
\begin{align}
& \text{RLP:} && \sum_{e\in\mathcal{S}':\, e \text{ belongs to } f} x_{e}=1, && \forall f\in\mathcal{F}', \label{RLP:a}\\
& && \sum_{e\in\mathcal{S}':\, e \in c} p_{e}x_{e}\leq T_{c}, && \forall c\in\mathcal{C}', \label{RLP:b}\\
& && x_{e}\geq 0, && \forall e\in\mathcal{S}'. \label{RLP:c}
\end{align}
\end{subequations}
For a constraint $c\in\mathcal{C}'$, let $E(c)=\{e\in\mathcal{S}' \mid e\in c\}$ and $d(c)=|E(c)|$. In each iteration, the algorithm computes an extreme point solution of RLP and deletes the edges whose value is zero. If some edge has value one, the corresponding flow is fixed to that core. Otherwise, the algorithm drops a capacity constraint whose total fractional deficiency $\sum_{e\in E(c)}(1-x_{e})$ is at most $2$; Lemma~\ref{lem:IR1} below shows that such a constraint always exists. This mirrors the iterative relaxation technique for the generalized assignment problem~\cite{LauRaviSingh2011}, with the difference that each of our variables appears in two capacity constraints instead of one.

\begin{algorithm}
\caption{Iterative relaxation rounding}
    \begin{algorithmic}[1]
		    \REQUIRE $\bar{\mathcal{F}}$, $\mathcal{S}$, and $T$ such that $\mathcal{P}(\bar{\mathcal{F}},\mathcal{S},T)\neq\emptyset$ and (\ref{eq:pmax}) holds
				\STATE $\mathcal{F}'\leftarrow\bar{\mathcal{F}}$, $\mathcal{S}'\leftarrow\mathcal{S}$, $\mathcal{C}'\leftarrow\mathcal{C}$, and $T_{c}\leftarrow T$ for every $c\in\mathcal{C}$
				\WHILE{$\mathcal{F}'\neq \emptyset$}
						\STATE Compute an extreme point solution $\textbf{x}$ of RLP
						\STATE $\mathcal{S}'\leftarrow\mathcal{S}'\setminus\{e\in\mathcal{S}' \mid x_{e}=0\}$
						\IF{there is an edge $e=((i,j,k),q)\in\mathcal{S}'$ with $x_{e}=1$}
								\STATE Assign flow $(i,j,k)$ to core $q$, i.e., $\hat{x}_{ijkq}\leftarrow 1$
								\STATE $\mathcal{F}'\leftarrow\mathcal{F}'\setminus\{(i,j,k)\}$ and remove from $\mathcal{S}'$ all edges of flow $(i,j,k)$
								\STATE For each $c\in\{(i,q),(j,q)\}\cap\mathcal{C}'$, set $T_{c}\leftarrow T_{c}-p_{ijkq}$
						\ELSE
								\STATE Find $c\in \mathcal{C}'$ with $d(c)\geq 1$ and $\sum_{e\in E(c)}(1-x_{e})\leq 2$
								\STATE $\mathcal{C}'\leftarrow\mathcal{C}'\setminus\{c\}$ \label{AlgIR:relax}
						\ENDIF
				\ENDWHILE
				\STATE \textbf{return} $\hat{\textbf{x}}$
   \end{algorithmic}
\label{AlgIR}
\end{algorithm}

We first show that RLP remains feasible, so that every iteration is well defined.

\begin{lem}\label{lem:feasible}
At the beginning of every iteration of Algorithm~\ref{AlgIR}, RLP is feasible and bounded.
\end{lem}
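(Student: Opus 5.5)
We argue by induction on the iterations. Boundedness is immediate and independent of the iteration: the variables satisfy $x_e\ge 0$ and, for each flow $f\in\mathcal{F}'$, $\sum_{e \text{ of } f}x_e=1$, so every variable lies in $[0,1]$ and RLP is a bounded polytope. Note that every edge in $\mathcal{S}'$ belongs to some flow in $\mathcal{F}'$, since edges of assigned flows are removed. Hence an extreme point exists whenever the polytope is nonempty. It remains to show feasibility, and we maintain the invariant that the extreme point $\textbf{x}$ computed in the previous iteration, restricted to the current variables, is feasible for the current RLP.

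\textbf{Base case.} At the first iteration we have $\mathcal{F}'=\bar{\mathcal{F}}$, $\mathcal{S}'=\mathcal{S}$, $\mathcal{C}'=\mathcal{C}$, and $T_c=T$ for every $c$. In this case RLP coincides with $\mathcal{P}(\bar{\mathcal{F}},\mathcal{S},T)$, which is nonempty by the input requirement of Algorithm~\ref{AlgIR}.

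\textbf{Inductive step.} Let $\textbf{x}$ be the extreme point computed in an iteration that is not the last, and consider the three kinds of updates in turn.
\begin{enumerate}
\item \emph{Deleting edges with $x_e=0$.} The restriction of $\textbf{x}$ still satisfies every assignment equality and every capacity inequality, because the deleted terms contributed zero.
\item \emph{Fixing an edge $e=((i,j,k),q)$ with $x_e=1$.} The other edges of flow $(i,j,k)$ have value $0$ by (\ref{RLP:a}) and nonnegativity, so removing the flow and all of its edges removes exactly the contribution $p_e x_e=p_{ijkq}$ from the constraints $(i,q)$ and $(j,q)$, and nothing else. The residual capacities of these two constraints, whenever they are still in $\mathcal{C}'$, are decreased by the same amount $p_{ijkq}$. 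Therefore every remaining inequality (\ref{RLP:b}) still holds for the restricted $\textbf{x}$, and the remaining equalities (\ref{RLP:a}) are unaffected.
\item \emph{Dropping a capacity constraint $c$ in line~\ref{AlgIR:relax}.} Removing a constraint only enlarges the feasible region, so $\textbf{x}$ remains feasible.
\end{enumerate}
Hence the new RLP is feasible, and by the boundedness argument above it has an extreme point.

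The only delicate point is to confirm that, in the fixing step, the removed edges of the same flow were exactly those with value zero, so that the capacity bookkeeping $T_c\leftarrow T_c-p_{ijkq}$ matches the load actually removed. The existence of the constraint $c$ required in line 10 is not needed here; it is the content of Lemma~\ref{lem:IR1}. Finally, note that residual capacities may become negative only if infeasibility arose, and the invariant just shown rules this out for every constraint that is still enforced.
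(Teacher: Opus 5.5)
Your proof is correct and follows essentially the same route as the paper. The base case comes from the nonemptiness of $\mathcal{P}(\bar{\mathcal{F}},\mathcal{S},T)$, you check that the previous extreme point, restricted to the surviving edges, stays feasible under each of the three updates, and boundedness follows from $0\le x_e\le 1$; your extra remark that every edge of $\mathcal{S}'$ belongs to a flow of $\mathcal{F}'$ (so that this bound covers every variable) makes explicit a point the paper leaves implicit.
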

\begin{proof}
Initially, RLP is the system defining $\mathcal{P}(\bar{\mathcal{F}},\mathcal{S},T)$, which is nonempty by assumption. Suppose $\textbf{x}$ is the extreme point computed in some iteration. Deleting the edges with $x_{e}=0$ does not affect the feasibility of $\textbf{x}$. If an edge $e=((i,j,k),q)$ with $x_{e}=1$ is fixed, then all other edges of flow $(i,j,k)$ have value zero by~(\ref{RLP:a}), and the restriction of $\textbf{x}$ to the remaining edges satisfies~(\ref{RLP:a}) for the remaining flows and~(\ref{RLP:b}) with the reduced capacities $T_{c}-p_{e}$. If a capacity constraint is dropped, $\textbf{x}$ remains feasible. Hence RLP is feasible in the next iteration. It is bounded because~(\ref{RLP:a}) and~(\ref{RLP:c}) imply $0\leq x_{e}\leq 1$ for every edge.
\end{proof}

The next lemma is the heart of the analysis.

\begin{lem}\label{lem:IR1}
Let $\textbf{x}$ be an extreme point solution of RLP with $0<x_{e}<1$ for every $e\in\mathcal{S}'$, and suppose $\mathcal{F}'\neq\emptyset$. Then there exists a constraint $c\in\mathcal{C}'$ with $d(c)\geq 1$ and $\sum_{e\in E(c)}(1-x_{e})\leq 2$.
\end{lem}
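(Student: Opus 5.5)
Throughout, let $\textbf{x}$ be the extreme point and assume $0<x_e<1$ on all of $\mathcal{S}'$. I would prove the statement by a rank counting argument, followed by a token distribution (charging) argument in the style of Lau, Ravi, and Singh~\cite{LauRaviSingh2011}.

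\emph{Step 1 (rank count).} Since $\textbf{x}$ is an extreme point and no nonnegativity constraint is tight, $|\mathcal{S}'|$ linearly independent tight constraints come from~(\ref{RLP:a}) and~(\ref{RLP:b}). So there are sets $\mathcal{F}''\subseteq\mathcal{F}'$ and $\mathcal{C}''\subseteq\mathcal{C}'$ of tight constraints with $|\mathcal{S}'|\le|\mathcal{F}''|+|\mathcal{C}''|\le|\mathcal{F}'|+|\mathcal{C}''|$, and we may take $\mathcal{C}''$ to consist of constraints with $d(c)\ge 1$. Every flow $f\in\mathcal{F}'$ has at least two edges, because its values sum to $1$ and each is strictly below $1$.

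\emph{Step 2 (contradiction hypothesis).} Suppose every $c\in\mathcal{C}'$ with $d(c)\ge1$ has $\sum_{e\in E(c)}(1-x_e)>2$. Each edge lies in exactly two capacity constraints. Dropped constraints are no longer in $\mathcal{C}'$, so an edge may lie in fewer than two constraints of $\mathcal{C}'$.

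\emph{Step 3 (token distribution).} Give each edge $e$ two tokens. It sends $x_e$ to its flow, and for each of its (at most two) constraints in $\mathcal{C}'$ it sends $(1-x_e)/2$. The tokens sent in total are at most $x_e+(1-x_e)=1$, so each edge has leftover at least $1$. Each flow then collects exactly $\sum x_e=1$. Each constraint $c$ collects $\tfrac12\sum_{E(c)}(1-x_e)>1$, hence strictly more than one token. Summing over $\mathcal{F}'$ and $\mathcal{C}''$ gives
\[
2|\mathcal{S}'|\;\ge\;\sum_{e}(\text{tokens sent}) + \sum_{e}(\text{leftover}) \;\ge\; \bigl(|\mathcal{F}'|+|\mathcal{C}''|\bigr) + |\mathcal{S}'| .
\]
Since the inequality for constraints is strict whenever $\mathcal{C}''\neq\emptyset$, this yields $|\mathcal{S}'|>|\mathcal{F}'|+|\mathcal{C}''|$, contradicting Step 1.

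\emph{Step 4 (the case $\mathcal{C}''=\emptyset$).} If $\mathcal{C}''=\emptyset$, Step 1 gives $|\mathcal{S}'|\le|\mathcal{F}'|$. But every flow has at least two edges, so $|\mathcal{S}'|\ge 2|\mathcal{F}'|>|\mathcal{F}'|$ because $\mathcal{F}'\neq\emptyset$. This is again a contradiction, and the strictness needed in Step 3 is only required when $\mathcal{C}''\neq\emptyset$.

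\emph{Main obstacle.} The delicate part is the token accounting in Step 3. The naive $1/2$ share per constraint must be checked to give a strict inequality, and edges whose constraints were dropped must also be handled. Those edges only add leftover, so they are harmless. If the accounting falls short, I would refine it by noting that tight flows alone give rank at most $|\mathcal{F}'|\le|\mathcal{S}'|/2$. That forces $|\mathcal{C}''|\ge|\mathcal{S}'|/2$. Then
\[
\sum_{c\in\mathcal{C}''}\sum_{e\in E(c)}(1-x_e)\;\le\;2\sum_{e}(1-x_e)\;=\;2\bigl(|\mathcal{S}'|-|\mathcal{F}'|\bigr),
\]
and averaging over the at least $|\mathcal{S}'|-|\mathcal{F}'|$ constraints in $\mathcal{C}''$ gives some $c$ with deficiency at most $2$ directly. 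I expect this averaging version to be the cleanest final write-up.
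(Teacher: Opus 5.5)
Your proof is correct and follows the paper's approach. The token distribution in Step 3 repackages the paper's double count $\sum_{c}\sum_{e\in E(c)}(1-x_e)\le 2\sum_{e}(1-x_e)=2(|\mathcal{S}'|-|\mathcal{F}'|)$ together with the rank bound, it already closes as written (so the hedge in your ``main obstacle'' paragraph is unnecessary), and your fallback averaging is the paper's proof, except that the paper averages over all of $\mathcal{C}^{+}=\{c\in\mathcal{C}'\mid d(c)\geq 1\}$ rather than over the tight basis constraints $\mathcal{C}''$, which makes no difference.
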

\begin{proof}
Let $V=|\mathcal{S}'|$ be the number of variables and let $\mathcal{C}^{+}=\{c\in\mathcal{C}' \mid d(c)\geq 1\}$. Since no variable equals $1$ and the variables of each flow in $\mathcal{F}'$ sum to $1$, every flow in $\mathcal{F}'$ has at least two edges, and hence
\begin{equation}\label{eq:V1}
V\geq 2|\mathcal{F}'|.
\end{equation}
By the rank lemma~\cite{LauRaviSingh2011}, an extreme point of RLP is the unique solution of $V$ linearly independent constraints of RLP that are tight at $\textbf{x}$. No nonnegativity constraint is tight, and a capacity constraint with $d(c)=0$ has a zero left-hand side, so these $V$ constraints are assignment constraints~(\ref{RLP:a}) or capacity constraints in $\mathcal{C}^{+}$. Therefore,
\begin{equation}\label{eq:V2}
V\leq |\mathcal{F}'|+|\mathcal{C}^{+}|.
\end{equation}
If $\mathcal{C}^{+}=\emptyset$, then~(\ref{eq:V1}) and~(\ref{eq:V2}) give $2|\mathcal{F}'|\leq|\mathcal{F}'|$, which contradicts $\mathcal{F}'\neq\emptyset$; thus $\mathcal{C}^{+}\neq\emptyset$. Each edge belongs to at most two constraints of $\mathcal{C}^{+}$, and the values of the edges sum to $|\mathcal{F}'|$ by~(\ref{RLP:a}). Using~(\ref{eq:V2}), we obtain
\begin{equation*}
\sum_{c\in\mathcal{C}^{+}}\ \sum_{e\in E(c)}(1-x_{e})\ \leq\ 2\sum_{e\in\mathcal{S}'}(1-x_{e})\ =\ 2\left(V-|\mathcal{F}'|\right)\ \leq\ 2|\mathcal{C}^{+}|.
\end{equation*}
Hence the average of $\sum_{e\in E(c)}(1-x_{e})$ over $c\in\mathcal{C}^{+}$ is at most $2$, and some $c\in\mathcal{C}^{+}$ satisfies the claim.
\end{proof}

\begin{thm}\label{thm:IR}
Suppose $\mathcal{P}(\bar{\mathcal{F}},\mathcal{S},T)\neq\emptyset$ and~(\ref{eq:pmax}) holds. Then Algorithm~\ref{AlgIR} runs in polynomial time and assigns every flow of $\bar{\mathcal{F}}$ to exactly one core $q$ with $((i,j,k),q)\in\mathcal{S}$. Moreover, let $P=\max_{e\in\mathcal{S}}p_{e}$, so that $P\leq T$ by~(\ref{eq:pmax}). Then, for every core $q$, the load of every input port and every output port on $q$ is at most $T+2P\leq 3T$.
\end{thm}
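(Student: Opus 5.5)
We prove the three claims (termination in polynomial time, a valid integral assignment, and the load bound) in that order, using Lemmas~\ref{lem:feasible} and~\ref{lem:IR1} throughout.

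\textbf{Termination and running time.} By Lemma~\ref{lem:feasible}, RLP is feasible and bounded at the start of every iteration, so an extreme point $\textbf{x}$ exists and can be computed in polynomial time. After edges with $x_e=0$ are deleted, every remaining edge has value in $(0,1]$. Each iteration then makes measurable progress in one of two ways:
\begin{itemize}
\item If some edge has value $1$, a flow leaves $\mathcal{F}'$.
\item Otherwise all remaining values lie in $(0,1)$, and Lemma~\ref{lem:IR1} supplies a constraint $c\in\mathcal{C}'$ that can be dropped, so $|\mathcal{C}'|$ decreases.
\end{itemize}
Hence the number of iterations is at most $|\bar{\mathcal{F}}|+2Nm$.

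\textbf{Validity of the assignment.} Each flow is fixed only along an edge of $\mathcal{S}'\subseteq\mathcal{S}$. It is fixed exactly once, because all its edges are removed immediately afterwards. Every flow is eventually fixed, since the loop runs until $\mathcal{F}'=\emptyset$.

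\textbf{Load bound.} Fix a capacity constraint $c$ (a port on a core $q$) and split its final load into two parts.

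\emph{Case 1: $c$ is never dropped.} We maintain the invariant that $T_c$ equals $T$ minus the load already assigned to $c$, and that $T_c\ge 0$. Nonnegativity holds because, just before a flow with edge $e\in E(c)$ and $x_e=1$ is fixed, the tight-or-not constraint (\ref{RLP:b}) gives $p_e = p_e x_e\le \sum_{e'\in E(c)}p_{e'}x_{e'}\le T_c$. So the final load is at most $T$.

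\emph{Case 2: $c$ is dropped at some iteration.} At the moment of dropping, the load already placed on $c$ is at most $T-T_c$ by the same invariant. After dropping, the only further load on $c$ comes from flows fixed later along edges of $c$. Every such edge must lie in $E(c)$ at the moment of dropping, since edges are only ever removed and never added. The extra load is therefore at most $\sum_{e\in E(c)}p_e$, and we bound it as
\[
\sum_{e\in E(c)} p_e=\sum_{e\in E(c)}p_ex_e+\sum_{e\in E(c)}p_e(1-x_e)\le T_c+P\sum_{e\in E(c)}(1-x_e)\le T_c+2P.
\]
Here the first term uses (\ref{RLP:b}) at the extreme point, and the second uses $p_e\le P$ together with the deficiency bound from Lemma~\ref{lem:IR1}. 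Adding the two parts gives a total load of at most $T-T_c+T_c+2P=T+2P\le 3T$, using~(\ref{eq:pmax}).

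The main obstacle is this bookkeeping in Case 2. The key observation is that $\textbf{x}$ still satisfies (\ref{RLP:b}) for $c$ with the current residual $T_c$ at the moment of dropping. Every later assignment to $c$ uses an edge that was present then, so it is already accounted for in $\sum_{e\in E(c)}p_e$.
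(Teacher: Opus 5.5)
Your proposal is correct and follows essentially the same argument as the paper. It uses the same iteration bound $|\bar{\mathcal{F}}|+2Nm$, the same observation that every flow fixed after $c$ is dropped goes through an edge of $E(c)$ at the drop time, and the same split $\sum_{e\in E(c)}p_e=\sum_{e\in E(c)}p_e x_e+\sum_{e\in E(c)}p_e(1-x_e)\le T_c+2P$. The only cosmetic difference is that you check $T_c\ge 0$ directly from (\ref{RLP:b}) when a flow is fixed, whereas the paper deduces $\Lambda_c\le T$ from the feasibility of RLP (Lemma~\ref{lem:feasible}).
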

\begin{proof}
By Lemma~\ref{lem:feasible}, every iteration computes an extreme point of a feasible and bounded linear program, which can be done in polynomial time. If no edge has value one, Lemma~\ref{lem:IR1} guarantees that a constraint can be dropped. Hence each iteration either assigns a flow or drops a constraint, and the algorithm stops after at most $|\bar{\mathcal{F}}|+2Nm$ iterations. Every flow is assigned through an edge of $\mathcal{S}$, and once assigned, it is removed from $\mathcal{F}'$, so it is assigned exactly once.

Fix a capacity constraint $c$, and let $\Lambda_{c}$ denote the total transmission time of the flows assigned to $c$ so far. As long as $c\in\mathcal{C}'$, we have $T_{c}=T-\Lambda_{c}$, and the feasibility of RLP implies $\Lambda_{c}\leq T$. If $c$ is never dropped, its final load is at most $T$. Otherwise, consider the iteration in which $c$ is dropped, with extreme point $\textbf{x}$ and current load $\Lambda_{c}$. Since edges are only deleted and never added, every flow that is assigned to $c$ afterwards is assigned through an edge of the set $E(c)$ at that moment. Hence the final load of $c$ is at most $\Lambda_{c}+\sum_{e\in E(c)}p_{e}$. We split this quantity into a fractional part and a deficiency part:
\begin{equation*}
\Lambda_{c}+\sum_{e\in E(c)}p_{e}
= \left(\Lambda_{c}+\sum_{e\in E(c)}p_{e}x_{e}\right)+\sum_{e\in E(c)}p_{e}(1-x_{e}).
\end{equation*}
Since $c\in\mathcal{C}'$ when $\textbf{x}$ is computed, constraint~(\ref{RLP:b}) gives $\sum_{e\in E(c)}p_{e}x_{e}\leq T_{c}=T-\Lambda_{c}$, so the first part is at most $T$. Since $1-x_{e}>0$ and $p_{e}\leq P$ for every $e\in E(c)$, and since $c$ satisfies the condition in Line~\ref{AlgIR:relax}, the second part is at most
\begin{equation*}
\sum_{e\in E(c)}p_{e}(1-x_{e})\leq P\sum_{e\in E(c)}(1-x_{e})\leq 2P.
\end{equation*}
Therefore the final load of $c$ is at most $T+2P$, which is at most $3T$ because $P\leq T$.
\end{proof}

The bound $T+2P$ separates the role of the capacity $T$ from that of the largest transmission time $P$. For LP($T^{*}$), the two coincide up to $P\leq T^{*}$, and we obtain $3T^{*}$. In Section~\ref{sec:Weighted}, however, the capacity of a group will be twice as large as the transmission times allowed in it, and the finer bound will give a better constant.

We now apply Algorithm~\ref{AlgIR} to LP($T^{*}$), that is, with $\bar{\mathcal{F}}=\mathcal{F}$, $T=T^{*}$, and $\mathcal{S}=\{((i,j,k),q)\mid p_{ijkq}\leq T^{*}\}$. Let $T_{i}^{r}$ and $T_{j}^{r}$ denote the load of input port $i$ and output port $j$ on a core after rounding; the index of the core is omitted since the bound is the same for all cores.

\begin{lem}\label{lem:portload}
After applying Algorithm~\ref{AlgIR} to LP($T^{*}$), we have $T_{i}^{r}, T_{j}^{r} \leq \min\left\{\tau, 3\right\} T^{*}$ for every core, every $i \in \mathcal{I}$, and every $j \in \mathcal{J}$.
\end{lem}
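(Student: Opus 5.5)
The bound is the minimum of two separate estimates, and we prove each one on its own. Fix a core $q$ and a port, say input port $i$; output ports are handled identically. Let $c=(i,q)$ be the corresponding capacity constraint.

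\emph{The bound $3T^{*}$.} We apply Theorem~\ref{thm:IR} with $\bar{\mathcal{F}}=\mathcal{F}$, $T=T^{*}$ and $\mathcal{S}=\{((i,j,k),q)\mid p_{ijkq}\leq T^{*}\}$. By construction, every edge of $\mathcal{S}$ satisfies~(\ref{eq:pmax}). The polytope $\mathcal{P}(\mathcal{F},\mathcal{S},T^{*})$ is exactly the feasible region of LP($T^{*}$), and this region is nonempty because the bisection returns a feasible value. Theorem~\ref{thm:IR} therefore gives load at most $T^{*}+2P\leq 3T^{*}$, where $P=\max_{e\in\mathcal{S}}p_{e}\leq T^{*}$.

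\emph{The bound $\tau T^{*}$.} Theorem~\ref{thm:IR} also guarantees that each flow is assigned to a core $q$ with $((i,j,k),q)\in\mathcal{S}$. Hence every flow that lands on port $i$ of core $q$ has $p_{ijkq}\leq T^{*}$. At most $|\mathcal{F}_{i}|\leq\tau$ flows are incident to port $i$, so
\[
T_{i}^{r}\leq\sum_{(i,j,k)\in\mathcal{F}_{i}:\,\hat{x}_{ijkq}=1}p_{ijkq}\leq \tau T^{*}.
\]
Combining the two estimates gives $T_{i}^{r}\leq\min\{\tau,3\}T^{*}$, and the same argument applies to $T_{j}^{r}$.

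There is no real obstacle here. The only point that needs care is that the $\tau$ bound relies on the rounding never using an edge outside $\mathcal{S}$, that is, never placing a flow on a core where it takes longer than $T^{*}$. This holds because edges are only ever deleted and assignments are made through edges of $\mathcal{S}'\subseteq\mathcal{S}$.
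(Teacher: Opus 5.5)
Your proof is correct and matches the paper's argument essentially verbatim: the $3T^{*}$ bound comes from Theorem~\ref{thm:IR} with $T=T^{*}$, using feasibility of LP($T^{*}$) from the bisection and~(\ref{eq:pmax}) from the choice of $\mathcal{S}$. The $\tau T^{*}$ bound comes from the fact that at most $\tau$ flows meet a port and each is assigned through an edge of $\mathcal{S}$, so each contributes at most $T^{*}$.
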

\begin{proof}
By the definition of the bisection procedure, LP($T^{*}$) is feasible, and~(\ref{eq:pmax}) holds with $T=T^{*}$ by the choice of $\mathcal{S}$. Theorem~\ref{thm:IR} gives $T_{i}^{r}, T_{j}^{r}\leq 3T^{*}$. In addition, at most $\tau$ flows are incident to any port, and each of them is assigned through an edge of $\mathcal{S}$, so its transmission time on the chosen core is at most $T^{*}$. Hence $T_{i}^{r}, T_{j}^{r}\leq \tau T^{*}$ as well.
\end{proof}

Lemma~\ref{lem:portload} replaces the bound $\min\{\tau, m\}T^{*}$ of~\cite{chen2025nonsplitting} by a bound that does not depend on $m$.

\subsection{Scheduling on Each Core}
After the assignment, each core schedules its own flows independently (Algorithm~\ref{AlgMS}). As in~\cite{chen2025nonsplitting}, an EPS core uses the Birkhoff-von Neumann decomposition~\cite{marcus1959diagonals, Qiu2015}, a not-all-stop OCS core uses Sunflow~\cite{huang2016}, and an all-stop OCS core uses Reco-Sin~\cite{Tan2021}.

\begin{algorithm}
\caption{Scheduling for minimizing the makespan}
    \begin{algorithmic}[1]
				\STATE Run Algorithm~\ref{Alg1} to obtain $T^{*}$
				\STATE Run Algorithm~\ref{AlgIR} with $\bar{\mathcal{F}}=\mathcal{F}$, $T=T^{*}$, and $\mathcal{S}=\{((i,j,k),q)\mid p_{ijkq}\leq T^{*}\}$ to obtain $\hat{\textbf{x}}$
				\FOR{each $q\in \mathcal{M}$ in parallel}
						\IF{$q\in\mathcal{M}_{e}$}
							\STATE Schedule the flows assigned to $q$ by the Birkhoff-von Neumann decomposition~\cite{marcus1959diagonals, Qiu2015}
						\ELSIF{$q\in\mathcal{M}_{o}^{2}$}
							\STATE Schedule the flows assigned to $q$ by Sunflow~\cite{huang2016}
						\ELSE
							\STATE Schedule the flows assigned to $q$ by Reco-Sin~\cite{Tan2021}
						\ENDIF
				\ENDFOR
   \end{algorithmic}
\label{AlgMS}
\end{algorithm}

\subsection{Analysis}
Let $T^{s}$ denote the makespan of the schedule produced by Algorithm~\ref{AlgMS}, and let $T^{s}_{q}$ be the time at which core $q$ finishes its flows, so that $T^{s}=\max_{q}T^{s}_{q}$. For a core $q$, let $L_{q}$ be the largest load of a port on $q$ after rounding. By Lemma~\ref{lem:portload},
\begin{equation}\label{eq:Lq}
L_{q}\leq \min\{\tau,3\}T^{*}\quad\text{for every } q\in\mathcal{M}.
\end{equation}

\begin{lem}\label{lem:EPS}
If $q\in\mathcal{M}_{e}$, then $T^{s}_{q}\leq \min\{\tau,3\}T^{*}$.
\end{lem}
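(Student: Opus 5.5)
The plan is to reduce the statement to a classical fact about preemptive scheduling on a single non-blocking switch, and then plug in the bound~(\ref{eq:Lq}). Fix an EPS core $q\in\mathcal{M}_{e}$ and look at the flows that Algorithm~\ref{AlgIR} assigned to it. From them I form the $N\times N$ nonnegative matrix $A^{(q)}$ whose $(i,j)$ entry is $\sum_{k:\hat{x}_{ijkq}=1} p_{ijkq}$, the total transmission time on $q$ of the assigned flows from input $i$ to output $j$. By definition of the port load, the $i$th row sum of $A^{(q)}$ is the load of input port $i$ on $q$, and the $j$th column sum is the load of output port $j$ on $q$. Hence the largest row or column sum of $A^{(q)}$ is exactly $L_{q}$.

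Next, I invoke the Birkhoff-von Neumann schedule reviewed in the appendix~\cite{marcus1959diagonals, Qiu2015}. The argument runs in three steps:
\begin{itemize}
\item Augment $A^{(q)}$ to a matrix $\tilde{A}^{(q)}\ge A^{(q)}$ in which every row sum and every column sum equals $L_{q}$. This is the standard padding step: repeatedly increase an entry lying in a row and a column that are both deficient.
\item Decompose $\tilde{A}^{(q)}/L_{q}$, which is doubly stochastic, as a convex combination $\sum_{s}\lambda_{s}\Pi_{s}$ of permutation matrices.
\item Serve permutation $\Pi_{s}$ for $\lambda_{s}L_{q}$ time units, using the transmission times $p_{ijkq}=d_{ijk}/r_{q}$ on the links of $q$.
\end{itemize}
Each permutation is a set of pairwise port-disjoint transmissions, so the schedule is feasible on $q$. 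Every entry $(i,j)$ receives total service time at least $A^{(q)}_{ij}$, so all flows assigned to $q$ (served, say, one after another within the time reserved for pair $(i,j)$) finish by time $\sum_{s}\lambda_{s}L_{q}=L_{q}$.

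On an EPS core there are no reconfiguration delays, and flows may be preempted and multiplexed, so nothing is added to $L_{q}$. This gives $T^{s}_{q}\le L_{q}$, and combining with~(\ref{eq:Lq}) yields $T^{s}_{q}\le\min\{\tau,3\}T^{*}$.

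The only step needing care is the first one: checking that the per-core EPS schedule attains exactly the maximum port load, with no extra factor. This requires that the padding to a matrix with equal line sums $L_{q}$ exists. The existence is a standard fact, since total row deficiency equals total column deficiency, so a deficient row and a deficient column always coexist. It also requires that splitting a pair's service across several permutation slots is allowed on an EPS core, which holds because EPS permits preemption. I would simply cite the appendix and~\cite{Qiu2015} for this, rather than reprove it.
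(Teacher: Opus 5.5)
Your proposal is correct and follows the paper's proof. The paper cites the Birkhoff--von Neumann schedule (Theorem~\ref{BvN2} in the appendix) to get $T^{s}_{q}\le L_{q}$ and then applies~(\ref{eq:Lq}); your padding, decomposition, and serving steps simply unpack that appendix argument.
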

\begin{proof}
By Theorem~\ref{BvN2} in the appendix, the Birkhoff-von Neumann based schedule finishes the flows of an EPS core within the largest port load of that core. Hence $T^{s}_{q}=L_{q}$, and the claim follows from~(\ref{eq:Lq}).
\end{proof}

\begin{lem}\label{lem:NAS}
If $q\in\mathcal{M}_{o}^{2}$, then $T^{s}_{q}\leq 2\min\{\tau,3\}T^{*}$.
\end{lem}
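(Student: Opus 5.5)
The argument follows the pattern of Lemma~\ref{lem:EPS}, with the Birkhoff-von Neumann bound replaced by the known guarantee of Sunflow on a single not-all-stop OCS core. The reconfiguration delay is already charged into each flow's transmission time $p_{ijkq}=d_{ijk}/r_{q}+\delta_{q}$. So the port load $L_{q}$ counts, at every port, the raw transmission time plus one delay per flow incident to that port.

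The first step is to recall the guarantee of Sunflow from~\cite{huang2016}, as used in~\cite{chen2025nonsplitting}. Sunflow is a work-conserving, non-preemptive circuit scheduler: a flow waiting on idle ports $(i,j)$ gets a circuit as soon as both ports are free. Fix the flow $(i,j,k)$ that finishes last on core $q$. Whenever this flow has not yet started, at least one of its two ports $i$ or $j$ is busy, either transmitting another flow or reconfiguring for another flow. Since each flow contributes exactly $p_{ijkq}$ of busy time (transmission plus one reconfiguration) at each of its endpoints, the waiting time before $(i,j,k)$ starts is at most $(L_{i}-p_{ijkq})+(L_{j}-p_{ijkq})$. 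Here $L_i$ and $L_j$ are the port loads on core $q$. Adding the flow's own time $p_{ijkq}$ gives
\begin{equation*}
T^{s}_{q}\leq L_{i}+L_{j}-p_{ijkq}\leq 2L_{q}.
\end{equation*}
I will either cite this bound directly as the statement that Sunflow completes within twice the maximum port load, as in~\cite{chen2025nonsplitting}, or give the short charging argument just described.

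The second step combines this with~(\ref{eq:Lq}): $T^{s}_{q}\leq 2L_{q}\leq 2\min\{\tau,3\}T^{*}$, which proves the lemma.

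The main obstacle is justifying that Sunflow's behavior really fits the charging argument. In particular, I must check that the not-all-stop reconfiguration interrupts only the ports whose circuits change, so that the delay for a circuit is charged only to that circuit's two endpoints. I must also check that Sunflow never idles a pair of free ports that has pending demand. If the original Sunflow analysis is phrased in terms of $\rho+\tau\delta$, I would rewrite it using the per-port count of flows. Since every flow pays exactly one $\delta_{q}$, at most $|\mathcal{F}_i\cap q|\,\delta_q$ of reconfiguration time accrues at port $i$, and this is already absorbed into $L_{i}$.
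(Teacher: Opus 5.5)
Your proposal is correct and matches the paper's proof. The paper likewise cites the Sunflow guarantee~\cite{huang2016} that the last flow on $q$, from $i$ to $j$, finishes within $T_{i}^{r}+T_{j}^{r}\leq 2L_{q}$, and then applies~(\ref{eq:Lq}). Your optional charging argument, which gives the slightly sharper $L_{i}+L_{j}-p_{ijkq}$, is just a self-contained justification of the bound the paper takes from the literature.
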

\begin{proof}
By the analysis of Sunflow~\cite{huang2016} (see also~\cite{chen2025nonsplitting}), the flow that finishes last on $q$, say from input $i$ to output $j$, finishes no later than the sum of the loads of $i$ and $j$ on $q$. Hence $T^{s}_{q}\leq T_{i}^{r}+T_{j}^{r}\leq 2L_{q}$, and the claim follows from~(\ref{eq:Lq}).
\end{proof}

For an all-stop OCS core, the schedule length consists of a transmission part and a reconfiguration part. For a core $q\in\mathcal{M}_{o}^{1}$, let $\rho_{q}$ be the largest amount of data that a single port of $q$ has to send or receive, and let $\Delta_{q}$ be the largest number of flows assigned to $q$ that are incident to a single port. The analysis of Reco-Sin~\cite{Tan2021}, as applied in~\cite{chen2025nonsplitting}, shows that
\begin{equation}\label{eq:recosin}
T^{s}_{q}\leq 2\left(\frac{\rho_{q}}{r_{q}}+\Delta_{q}\,\delta_{q}\right),
\end{equation}
that is, at most twice the data transmission time of the busiest port plus the reconfiguration delay multiplied by the largest number of flows at an input or output port. In~\cite{chen2025nonsplitting}, the second term was bounded by $\tau\delta_{q}<\tau T^{*}$, which introduces an additive dependence on $\tau$. The following lemma shows that both terms are dominated by the port loads.

\begin{lem}\label{lem:reconf}
For every $q\in\mathcal{M}_{o}^{1}$, we have $\dfrac{\rho_{q}}{r_{q}}\leq L_{q}$ and $\Delta_{q}\,\delta_{q}\leq L_{q}$. Consequently, $\dfrac{\rho_{q}}{r_{q}},\ \Delta_{q}\,\delta_{q}\leq \min\{\tau,3\}T^{*}$.
\end{lem}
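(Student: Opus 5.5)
Both inequalities follow by comparing, at a fixed port of $q$, the quantities $\rho_{q}/r_{q}$ and $\Delta_{q}\delta_{q}$ with the port load defined through the OCS transmission times. The key fact is that for $q\in\mathcal{M}_{o}^{1}$ every assigned flow satisfies
\begin{equation*}
p_{ijkq}=\frac{d_{ijk}}{r_{q}}+\delta_{q},
\end{equation*}
where both summands are nonnegative. Hence each flow contributes to its port's load at least its pure transmission time $d_{ijk}/r_{q}$ and at least one reconfiguration delay $\delta_{q}$.

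For the first inequality, fix a port attaining $\rho_{q}$, say an input port $i$; the output case is symmetric. Let $A_{i}$ be the set of flows of $\mathcal{F}_{i}$ that Algorithm~\ref{AlgIR} assigns to $q$. Then
\begin{equation*}
\frac{\rho_{q}}{r_{q}}=\sum_{(i,j,k)\in A_{i}}\frac{d_{ijk}}{r_{q}}\le\sum_{(i,j,k)\in A_{i}}p_{ijkq}=T_{i}^{r}\le L_{q}.
\end{equation*}

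For the second inequality, fix a port attaining $\Delta_{q}$, say input $i$, with $|A_{i}|=\Delta_{q}$. Then
\begin{equation*}
\Delta_{q}\delta_{q}=\sum_{(i,j,k)\in A_{i}}\delta_{q}\le\sum_{(i,j,k)\in A_{i}}p_{ijkq}=T_{i}^{r}\le L_{q}.
\end{equation*}
The "consequently" part is then immediate from~(\ref{eq:Lq}), which gives $L_{q}\le\min\{\tau,3\}T^{*}$.

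The only point needing care is that $\rho_{q}$ and $\Delta_{q}$ may be attained at different ports, possibly one input and one output. This causes no difficulty, because each bound is argued separately at its own maximizing port, and $L_{q}$ is the maximum of the loads over all input and output ports of $q$. So the lemma reduces to the decomposition of $p_{ijkq}$, and no step is a real obstacle.
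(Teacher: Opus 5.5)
Your proof is correct and matches the paper's argument. At a port attaining $\rho_{q}$ (respectively $\Delta_{q}$), you use $p_{ijkq}=d_{ijk}/r_{q}+\delta_{q}$ to bound $\rho_{q}/r_{q}$ and $\Delta_{q}\delta_{q}$ by that port's load, hence by $L_{q}$, and then conclude with~(\ref{eq:Lq}); the paper does exactly this.
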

\begin{proof}
Let $v$ be a port of $q$ that sends or receives $\rho_{q}$ units of data, and let $A_{v}$ be the set of flows assigned to $q$ that are incident to $v$. Then
\begin{equation*}
\frac{\rho_{q}}{r_{q}}=\sum_{(i,j,k)\in A_{v}}\frac{d_{ijk}}{r_{q}}\leq\sum_{(i,j,k)\in A_{v}}\left(\frac{d_{ijk}}{r_{q}}+\delta_{q}\right)=\sum_{(i,j,k)\in A_{v}}p_{ijkq}\leq L_{q}.
\end{equation*}
Similarly, let $v'$ be a port of $q$ with $\Delta_{q}$ incident flows assigned to $q$, and let $A_{v'}$ be the set of these flows. Since every flow on an OCS core is charged one reconfiguration delay,
\begin{equation*}
\Delta_{q}\,\delta_{q}=\sum_{(i,j,k)\in A_{v'}}\delta_{q}\leq\sum_{(i,j,k)\in A_{v'}}p_{ijkq}\leq L_{q}.
\end{equation*}
The last statement follows from~(\ref{eq:Lq}).
\end{proof}

\begin{lem}\label{lem:AS}
If $q\in\mathcal{M}_{o}^{1}$, then $T^{s}_{q}\leq 4\min\{\tau,3\}T^{*}$.
\end{lem}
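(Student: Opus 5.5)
The plan is to combine the Reco-Sin bound~(\ref{eq:recosin}) with Lemma~\ref{lem:reconf}. For a core $q\in\mathcal{M}_{o}^{1}$, inequality~(\ref{eq:recosin}) bounds the completion time of the Reco-Sin schedule by $2\left(\rho_{q}/r_{q}+\Delta_{q}\delta_{q}\right)$. This expression splits into a transmission term and a reconfiguration term, and we bound each separately by the largest port load $L_{q}$.

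For the transmission term, Lemma~\ref{lem:reconf} gives $\rho_{q}/r_{q}\leq L_{q}$. For the reconfiguration term, the same lemma gives $\Delta_{q}\delta_{q}\leq L_{q}$. Substituting both into~(\ref{eq:recosin}) yields
\begin{equation*}
T^{s}_{q}\leq 2\left(L_{q}+L_{q}\right)=4L_{q}.
\end{equation*}
Applying~(\ref{eq:Lq}), which bounds $L_{q}$ by $\min\{\tau,3\}T^{*}$ via Lemma~\ref{lem:portload}, gives $T^{s}_{q}\leq 4\min\{\tau,3\}T^{*}$.

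The only point needing care is that the two maxima in~(\ref{eq:recosin}) may be attained at different ports. Here $\rho_{q}$ is the data of the busiest port and $\Delta_{q}$ is the flow count of a possibly different port. This is precisely why the argument bounds each term separately by $L_{q}$ rather than combining them at a single port, and it costs the factor $2$ inside the parenthesis.

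If one wanted a sharper bound, one could try to show that Reco-Sin's analysis actually controls the sum at a single port. That sum is $\sum_{A_v} p_{ijkq}\leq L_q$, which would give $2L_q$. However, the stated $4\min\{\tau,3\}$ only requires the separate bounds, and no further obstacle is expected beyond citing~(\ref{eq:recosin}) correctly.
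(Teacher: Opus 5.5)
Your proposal is correct and is essentially the paper's own proof: substituting the two bounds of Lemma~\ref{lem:reconf} into~(\ref{eq:recosin}) gives $T^{s}_{q}\leq 2(L_{q}+L_{q})=4L_{q}$, and~(\ref{eq:Lq}) finishes the argument. Your closing remark about a single-port sharpening to $2L_{q}$ is not needed, and the paper does not claim it, since it would require a stronger per-port form of the Reco-Sin analysis than~(\ref{eq:recosin}).
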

\begin{proof}
Combining~(\ref{eq:recosin}) with Lemma~\ref{lem:reconf} gives $T^{s}_{q}\leq 2(L_{q}+L_{q})=4L_{q}\leq 4\min\{\tau,3\}T^{*}$.
\end{proof}

Since $T^{*}\leq\mathrm{OPT}$ by~(\ref{eq:Tstar}) and $T^{s}=\max_{q}T^{s}_{q}$, Lemmas~\ref{lem:EPS}, \ref{lem:NAS}, and~\ref{lem:AS} yield the following theorem.

\begin{thm}\label{thm:makespan}
Algorithm~\ref{AlgMS} is a polynomial-time algorithm. Depending on the network architecture, its approximation ratio for the makespan is as follows:
\begin{itemize}
\item if all cores are EPS, the ratio is $\min\left\{\tau, 3\right\}$;
\item if all cores are not-all-stop OCS, the ratio is $2\min\left\{\tau, 3\right\}$;
\item if all cores are all-stop OCS, the ratio is $4\min\left\{\tau, 3\right\}$.
\end{itemize}
In a hybrid network, the ratio is the largest of the ratios of the architectures that are present. In particular, it is $2\min\{\tau,3\}\le 6$ when the network contains only EPS and not-all-stop OCS cores, and it is $4\min\{\tau,3\}\le 12$ whenever at least one all-stop OCS core is present.
\end{thm}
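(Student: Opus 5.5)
The plan is to assemble Theorem~\ref{thm:makespan} from the pieces already established, in two parts: polynomial running time and the approximation ratio.

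\emph{Running time.} Algorithm~\ref{Alg1} needs only $O(\log u)$ feasibility tests. Each test is a linear program with $nm$ variables and $n+2Nm$ constraints, so it is solvable in polynomial time. Here $\log u$ is polynomial in the input size because $u$ is a sum of transmission times. Theorem~\ref{thm:IR} shows that Algorithm~\ref{AlgIR} runs in polynomial time, with at most $n+2Nm$ iterations, each computing one extreme point. The three per-core schedulers are polynomial. The Birkhoff--von Neumann decomposition produces polynomially many matchings, and Sunflow and Reco-Sin are polynomial by~\cite{huang2016, Tan2021}. Running the cores in parallel adds at most a factor of $m$.

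\emph{Ratio.} The feasibility of the final schedule is immediate. By Theorem~\ref{thm:IR}, every flow is assigned to exactly one core. Each core schedules its own flows with a valid single-core algorithm, and distinct cores share no port constraints, since the links through different cores are disjoint. The makespan is therefore $T^{s}=\max_{q}T^{s}_{q}$.

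For each core type I would invoke the corresponding lemma:
\begin{itemize}
\item Lemma~\ref{lem:EPS} gives $T^{s}_{q}\le\min\{\tau,3\}T^{*}$ for $q\in\mathcal{M}_{e}$;
\item Lemma~\ref{lem:NAS} gives $T^{s}_{q}\le 2\min\{\tau,3\}T^{*}$ for $q\in\mathcal{M}_{o}^{2}$;
\item Lemma~\ref{lem:AS} gives $T^{s}_{q}\le 4\min\{\tau,3\}T^{*}$ for $q\in\mathcal{M}_{o}^{1}$.
\end{itemize}
Combining these with $T^{*}\le\mathrm{OPT}$ from~(\ref{eq:Tstar}) yields the three pure-network ratios.

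In a hybrid network, $T^{s}$ is the maximum over the cores present, so it is bounded by $\alpha\,\mathrm{OPT}$, where $\alpha$ is the largest of the architecture constants $1$, $2$, $4$ that appear, multiplied by $\min\{\tau,3\}$. The numerical statements follow from $\min\{\tau,3\}\le 3$.

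One detail needs care. The cores discarded because $\delta_{q}\ge T^{*}$ do not affect the argument. They receive no edges in $\mathcal{S}$, since $p_{ijkq}\ge\delta_{q}\ge T^{*}$, with equality only in the degenerate case. Even when equality holds, such a core is covered by the same lemmas.

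\emph{Main obstacle.} There is essentially no obstacle, since everything reduces to previously proved lemmas. The only step requiring attention is the polynomiality of the bisection. This depends on the integrality and size of the bounds $u$ and $l$, so I would note that after scaling all $p_{ijkq}$ are rational with polynomially bounded encoding. With that in place, the number of bisection steps, $\log_{2}(u-l)$, is polynomial.
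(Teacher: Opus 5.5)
Your proposal is correct and follows the paper's own argument: the theorem comes from combining Lemmas~\ref{lem:EPS}, \ref{lem:NAS}, and~\ref{lem:AS} with $T^{s}=\max_{q}T^{s}_{q}$ and $T^{*}\le\mathrm{OPT}$ from~(\ref{eq:Tstar}), with polynomiality resting on the bisection and Theorem~\ref{thm:IR}; you simply spell out the running-time and feasibility details that the paper leaves implicit. One small simplification: your hedge about a degenerate equality for discarded cores is unnecessary, because $d_{ijk}>0$ gives $p_{ijkq}=d_{ijk}/r_{q}+\delta_{q}>\delta_{q}\ge T^{*}$ strictly, so such cores never receive an edge.
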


\begin{rem}\label{rem:combine}
The bounds of Theorem~\ref{thm:makespan} do not depend on $m$, whereas the bounds of~\cite{chen2025nonsplitting} grow with $m$. The case $\tau=1$ is trivial: no two flows share a port, so sending every flow to a core on which its transmission time is smallest and starting all flows at time $0$ yields an optimal makespan (on an all-stop OCS core, all circuits are set up in a single reconfiguration). For $\tau\geq 2$ and small $m$ or $\tau$, the bounds of~\cite{chen2025nonsplitting} can still be smaller. For example, when $m=2$ and $\tau\geq 3$, the bounds of~\cite{chen2025nonsplitting} are $2$ and $4$ for EPS and not-all-stop OCS, compared with our $3$ and $6$; and for all-stop OCS with $m=2$, the bound $2\min\{2\tau-1, m+\tau-1\}=2\tau+2$ is smaller than $4\min\{\tau,3\}$ for $2\leq\tau\leq 4$. Running both algorithms and returning the schedule with the smaller makespan achieves the minimum of the two guarantees in every architecture.
\end{rem}

\subsection{A Lower Bound on the Integrality Gap}
The bound $3T^{*}$ of Theorem~\ref{thm:IR} cannot be improved below $2T^{*}$ by any rounding of LP($T$), as the following example shows.

\begin{prop}\label{prop:gap}
For every $m\geq 2$, there is an instance with $m$ EPS cores for which $\mathrm{OPT}/T^{*}\geq \frac{2m}{m+1}$. Hence the integrality gap of LP($T$) is at least $2$.
\end{prop}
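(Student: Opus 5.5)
The plan is to build a single-port instance in which the fractional solution spreads $m+1$ equal flows evenly over $m$ identical cores, while any integral assignment must put two of them on the same core. Take $m$ identical EPS cores with $r_{q}=1$. Take $m+1$ flows, all from input port $1$ to pairwise distinct output ports $j=1,\ldots,m+1$, each with size $d=m$. Then $p_{ijkq}=m\ge 1$ for every flow and core, so the scaling assumption holds and constraint~(\ref{interval:e}) allows every core whenever $T\ge m$. I choose integer sizes on purpose: the bisection in Algorithm~\ref{Alg1} only tests integer values of $d$, and integer sizes make $T^{*}$ an integer that I can compute exactly.

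\emph{Step 1: compute $T^{*}$.}
\begin{itemize}
\item Feasibility at $T=m+1$: set $x_{ijkq}=1/m$ for every flow and core. Then every output port carries load $1\le T$ on every core. Input port $1$ carries $(m+1)\cdot m\cdot\frac{1}{m}=m+1$ on every core. So LP($m+1$) is feasible.
\item Infeasibility below $m+1$: summing constraint~(\ref{interval:b}) for port $1$ over all $m$ cores and using~(\ref{interval:a}) gives $mT\ge (m+1)m$, hence $T\ge m+1$.
\item Conclusion: $T^{*}=m+1$. I should also check that the initial bounds bracket this value. We have $u=(m+1)m\ge m+1$ and $l=\rho/\sum_{q}r_{q}=(m+1)m/m=m+1$, so the procedure returns $T^{*}=m+1$.
\end{itemize}

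\emph{Step 2: compute $\mathrm{OPT}$.}
\begin{itemize}
\item Lower bound: in any non-splitting schedule, pigeonhole forces some core to carry at least two of the $m+1$ flows. On that core, input port $1$ must transmit at least $2m$ units of data at rate $1$. Since an EPS port sends at most rate $1$ at any time, the makespan is at least $2m$.
\item Upper bound: put two flows on one core and one flow on each remaining core. By Theorem~\ref{BvN2}, the Birkhoff--von Neumann schedule finishes within the largest port load, which is $2m$.
\item Conclusion: $\mathrm{OPT}=2m$, so $\mathrm{OPT}/T^{*}=\frac{2m}{m+1}$.
\end{itemize}

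\emph{Step 3: pass to the integrality gap.} Letting $m\to\infty$ gives $\frac{2m}{m+1}\to 2$. Hence the supremum of $\mathrm{OPT}/T^{*}$ over instances is at least $2$, which is the claimed integrality gap. The same instance also shows the port-load statement: any integral assignment has a port load of at least $2m=\frac{2m}{m+1}T^{*}$, so no rounding of LP($T$) can guarantee port loads below $2T^{*}$.

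The only delicate point I expect is Step 1: pinning down exactly what the integer bisection returns. It must not stop at a smaller value through the floor operation or through constraint~(\ref{interval:e}). Taking sizes equal to $m$, so that $l$ already equals $m+1$, is what removes this issue.
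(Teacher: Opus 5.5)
Your proposal is correct and uses the same construction and argument as the paper: $m$ unit-rate EPS cores, $m+1$ flows of size $m$ out of one input port, the uniform fractional solution $x=1/m$, and pigeonhole for the integral side. Your extra work is sound but not needed, since only $T^{*}\le m+1$ and $\mathrm{OPT}\ge 2m$ enter the ratio. That extra work is pinning down $T^{*}=m+1$ exactly via the summed capacity constraints and the bisection bracket, and showing $\mathrm{OPT}=2m$ via a Birkhoff--von Neumann schedule.
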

\begin{proof}
Consider $m$ EPS cores of rate $1$ and a single input port $i$ with $m+1$ flows of size $m$, each to a different output port. Setting $x_{ijkq}=\frac{1}{m}$ for every flow and every core gives a load of $(m+1)\cdot m\cdot\frac{1}{m}=m+1$ at port $i$ on every core, and every output port has load $1$ on every core; hence LP($m+1$) is feasible and $T^{*}\leq m+1$. In any non-splitting assignment, two of the $m+1$ flows share a core, so the load of port $i$ on that core is at least $2m$, and therefore $\mathrm{OPT}\geq 2m$. The ratio $\frac{2m}{m+1}$ tends to $2$ as $m$ grows.
\end{proof}

\section{Approximation Algorithm for Minimizing the Total Weighted Completion Time}\label{sec:Weighted}
We now turn to the total weighted coflow completion time. We first review the interval-indexed approach of~\cite{chen2025nonsplitting}, which is inspired by Chudak and Shmoys~\cite{CHUDAK1999323}, and then explain how the rounding inside each interval is replaced by Algorithm~\ref{AlgIR}.

\subsection{The Interval-Indexed Relaxation of~\cite{chen2025nonsplitting}}
As in~\cite{chen2025nonsplitting}, let $L$ be an integer, polynomial in the input size, such that the horizon $2^{L}$ is long enough to contain an optimal schedule. The time horizon is divided into the geometrically increasing intervals $[1,2], (2, 4], \ldots, (2^{L-1}, 2^{L}]$, and we set $t_{l}=2^l$ for $l=0, 1, \ldots, L$. The variable $x_{ijkql}$ indicates that flow $(i,j,k)$ is transmitted through core $q$ and completes within the $l$-th interval $(t_{l-1},t_{l}]$. The relaxation of~\cite{chen2025nonsplitting} is
\begin{subequations}\label{coflow:main:wc}
\begin{align}
&    \text{min} && \sum_{k\in \mathcal{K}}w_k C_{k}                           && \tag{\ref{coflow:main:wc}}\\
&    \text{s.t.} && \sum_{q\in \mathcal{M}} \sum_{l=1}^{L} x_{ijkql}=1, && \forall (i, j, k) \in \mathcal{F},   \label{coflow:wc:a}\\
&         && r_{k}+\sum_{q\in \mathcal{M}} p_{ijkq}\sum_{l=1}^{L}x_{ijkql}\leq C_{ijk},  && \forall (i, j, k) \in \mathcal{F}, \label{coflow:wc:b}\\
&         && \sum_{u=1}^{l} \sum_{(i,j,k)\in \mathcal{F}_{i}} p_{ijkq}\, x_{ijkqu}  \leq t_{l}, && \forall i\in \mathcal{I}, \forall q\in \mathcal{M}, \forall l\in [1, L], \label{coflow:wc:f}\\
&         && \sum_{u=1}^{l} \sum_{(i,j,k)\in \mathcal{F}_{j}} p_{ijkq}\, x_{ijkqu}  \leq t_{l}, && \forall j\in \mathcal{J}, \forall q\in \mathcal{M}, \forall l\in [1, L], \label{coflow:wc:g}\\
&         && \sum_{l=1}^{L}t_{l-1} \sum_{q\in \mathcal{M}} x_{ijkql} \leq C_{ijk}, && \forall (i, j, k) \in \mathcal{F}, \label{coflow:wc:h}\\
&         && C_{ijk} \leq C_{k}, && \forall (i, j, k) \in \mathcal{F}, \label{coflow:wc:j}\\
&         && x_{ijkql} \geq 0, && \forall (i, j, k) \in \mathcal{F}, \forall q\in \mathcal{M}, \forall l\in [1, L], \label{coflow:wc:i} \\
&         && x_{ijkql} = 0, && \text{if } r_k+p_{ijkq}>t_{l}. \label{coflow:wc:k}
\end{align}
\end{subequations}
Constraint~(\ref{coflow:wc:a}) assigns every flow to one core and one interval; (\ref{coflow:wc:b}) and~(\ref{coflow:wc:h}) are two lower bounds on the completion time of a flow, namely its release time plus its transmission time and the left end point of its interval; (\ref{coflow:wc:f}) and~(\ref{coflow:wc:g}) state that the flows completing by time $t_{l}$ load each port of each core by at most $t_{l}$; (\ref{coflow:wc:j}) says that a coflow completes no earlier than any of its flows; and~(\ref{coflow:wc:k}) excludes completions that are impossible because of the release time and the transmission time. Every non-splitting schedule induces a feasible solution with the same objective value~\cite{chen2025nonsplitting}, so the optimal value of~(\ref{coflow:main:wc}) is a lower bound on the optimum. Throughout this section, $(\textbf{x},\textbf{C})$ denotes an optimal solution of~(\ref{coflow:main:wc}), and $C_{ijk}$, $C_{k}$ refer to its values.

As in~\cite{chen2025nonsplitting}, every flow $(i,j,k)$ is assigned to the group $g(i,j,k)=\max\{g_{1},g_{2}\}$, where
\begin{equation*}
g_{1} = \min \left\{ g \ \middle| \ \sum_{l=1}^{g}\sum_{q\in \mathcal{M}} x_{ijkql} \geq \frac{1}{2} \right\}
\quad\text{and}\quad
g_{2} = \min \left\{ g \ \middle| \ C_{ijk} \leq t_{g} \right\}.
\end{equation*}
Let $\mathcal{G}_{g}=\{(i,j,k)\in\mathcal{F}\mid g(i,j,k)=g\}$. In~\cite{chen2025nonsplitting}, every flow is sent to the core that maximizes $\sum_{l=1}^{g(i,j,k)}x_{ijkql}$; since this maximum is only guaranteed to be a $\frac{1}{m}$ fraction of the total, the port loads of a group were bounded by $2m\,t_{g}$, which is the origin of the factor $m$ in the ratios of~\cite{chen2025nonsplitting}. We instead round each group by Algorithm~\ref{AlgIR}.

\subsection{Rounding Each Group by Iterative Relaxation}
For a group $g$ and a flow $(i,j,k)\in\mathcal{G}_{g}$, define
\begin{equation}\label{eq:y}
y_{ijkq}=\frac{\sum_{l=1}^{g}x_{ijkql}}{\sum_{q'\in\mathcal{M}}\sum_{l=1}^{g}x_{ijkq'l}}\qquad\text{for } q\in\mathcal{M},
\end{equation}
and let $\mathcal{S}_{g}=\{((i,j,k),q)\mid (i,j,k)\in\mathcal{G}_{g},\ r_{k}+p_{ijkq}\leq t_{g}\}$.

\begin{lem}\label{lem:groupLP}
For every group $g$, the vector $\textbf{y}$ restricted to $\mathcal{S}_{g}$ belongs to $\mathcal{P}(\mathcal{G}_{g},\mathcal{S}_{g},2t_{g})$. Moreover, $p_{ijkq}\leq t_{g}\leq 2t_{g}$ for every $((i,j,k),q)\in\mathcal{S}_{g}$.
\end{lem}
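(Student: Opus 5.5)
We verify the three requirements of $\mathcal{P}(\mathcal{G}_{g},\mathcal{S}_{g},2t_{g})$ directly from the definitions of the group $g(i,j,k)$ and of $\textbf{y}$: nonnegativity, the assignment equalities over $\mathcal{S}_{g}$, and the capacity inequalities with right-hand side $2t_{g}$. The bound $p_{ijkq}\leq t_{g}$ on $\mathcal{S}_{g}$ is immediate, since $r_{k}\geq 0$ and membership in $\mathcal{S}_{g}$ means $r_{k}+p_{ijkq}\leq t_{g}$.

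\emph{Assignment equalities.} Fix $(i,j,k)\in\mathcal{G}_{g}$ and write $Z=\sum_{q'}\sum_{l=1}^{g}x_{ijkq'l}$. Since $g=g(i,j,k)\geq g_{1}$ and the partial sums over $l$ are nondecreasing, we have $Z\geq\frac{1}{2}$. In particular $Z>0$, so $\textbf{y}$ is well defined and nonnegative. Summing~(\ref{eq:y}) over all $q\in\mathcal{M}$ gives $1$. It remains to check that no mass is lost when we restrict to $\mathcal{S}_{g}$. If $((i,j,k),q)\notin\mathcal{S}_{g}$, then $r_{k}+p_{ijkq}>t_{g}\geq t_{l}$ for every $l\leq g$, so constraint~(\ref{coflow:wc:k}) forces $x_{ijkql}=0$ for all $l\leq g$, and hence $y_{ijkq}=0$. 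Therefore $\sum_{q:((i,j,k),q)\in\mathcal{S}_{g}}y_{ijkq}=1$.

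\emph{Capacity inequalities.} Fix an input port $i$ and a core $q$; the output-port case is symmetric, using~(\ref{coflow:wc:g}) in place of~(\ref{coflow:wc:f}). Since $Z\geq\frac{1}{2}$, we have $y_{ijkq}\leq 2\sum_{l=1}^{g}x_{ijkql}$. This gives
\begin{equation*}
\sum_{(i,j,k)\in\mathcal{G}_{g}\cap\mathcal{F}_{i}}p_{ijkq}\,y_{ijkq}\leq 2\sum_{(i,j,k)\in\mathcal{F}_{i}}\sum_{u=1}^{g}p_{ijkq}\,x_{ijkqu}\leq 2t_{g}.
\end{equation*}
The middle step enlarges the sum from $\mathcal{G}_{g}\cap\mathcal{F}_{i}$ to all of $\mathcal{F}_{i}$, which is valid because every term is nonnegative. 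The last step is~(\ref{coflow:wc:f}) with $l=g$.

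The only delicate point is the lower bound $Z\geq\frac{1}{2}$, which holds because the group index is at least $g_{1}$; everything else is bookkeeping. The factor $2$ in the capacity $2t_{g}$ comes exactly from normalizing by $Z\geq\frac{1}{2}$.
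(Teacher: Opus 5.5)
Your proposal is correct and follows essentially the same route as the paper. You use $Z\geq\frac{1}{2}$ (from $g\geq g_{1}$) to make $\textbf{y}$ well defined, and the release-time/transmission-time exclusion constraint to show the support of $\textbf{y}$ lies in $\mathcal{S}_{g}$ (the contrapositive of the paper's support argument). The port-capacity constraints at $l=g$, together with the factor $2$ from normalization, then give the capacity $2t_{g}$.
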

\begin{proof}
Fix $(i,j,k)\in\mathcal{G}_{g}$. Since $g\geq g_{1}$, the denominator in~(\ref{eq:y}) is at least $\frac{1}{2}$, so $\textbf{y}$ is well defined and $\sum_{q}y_{ijkq}=1$. If $y_{ijkq}>0$, then $x_{ijkql}>0$ for some $l\leq g$, and~(\ref{coflow:wc:k}) gives $r_{k}+p_{ijkq}\leq t_{l}\leq t_{g}$; hence $\textbf{y}$ is supported on $\mathcal{S}_{g}$, and $p_{ijkq}\leq t_{g}$ on $\mathcal{S}_{g}$. Finally, for an input port $i$ and a core $q$, constraint~(\ref{coflow:wc:f}) with $l=g$ yields
\begin{equation*}
\sum_{(i,j,k)\in\mathcal{F}_{i}\cap\mathcal{G}_{g}}p_{ijkq}\,y_{ijkq}\leq 2\sum_{(i,j,k)\in\mathcal{F}_{i}}p_{ijkq}\sum_{l=1}^{g}x_{ijkql}\leq 2t_{g}.
\end{equation*}
The same argument with~(\ref{coflow:wc:g}) applies to every output port.
\end{proof}

For each group $g$, we run Algorithm~\ref{AlgIR} with $\bar{\mathcal{F}}=\mathcal{G}_{g}$, $\mathcal{S}=\mathcal{S}_{g}$, and $T=2t_{g}$; by Lemma~\ref{lem:groupLP}, the input requirements are satisfied. Let $\mathcal{A}_{q,g}$ be the set of flows of $\mathcal{G}_{g}$ that are assigned to core $q$. The overall procedure is summarized in Algorithm~\ref{AlgWC}. The groups are processed one after another on every core, and group $g$ is started on core $q$ when both core $q$ has finished group $g-1$ and time $t_{g}$ has been reached. Since every flow in $\mathcal{G}_{g}$ is admissible only on cores with $r_{k}+p_{ijkq}\leq t_{g}$, all flows of group $g$ have been released when the group starts.

\begin{algorithm}
\caption{Scheduling for minimizing the total weighted completion time}
    \begin{algorithmic}[1]
				\STATE Solve the linear program (\ref{coflow:main:wc}) to obtain $(\textbf{x},\textbf{C})$
				\FOR{each flow $(i, j, k)\in \mathcal{F}$}
				    \STATE Compute $g(i,j,k)=\max\{g_{1},g_{2}\}$ and add $(i,j,k)$ to $\mathcal{G}_{g(i,j,k)}$
				\ENDFOR
				\FOR{each group $g=1,\ldots,L$ with $\mathcal{G}_{g}\neq\emptyset$}
						\STATE Compute $\textbf{y}$ by~(\ref{eq:y}) and run Algorithm~\ref{AlgIR} with $\bar{\mathcal{F}}=\mathcal{G}_{g}$, $\mathcal{S}=\mathcal{S}_{g}$, and $T=2t_{g}$
						\STATE Let $\mathcal{A}_{q,g}$ be the flows of $\mathcal{G}_{g}$ assigned to core $q$, for each $q\in\mathcal{M}$
				\ENDFOR
				\FOR{each $q\in \mathcal{M}$ in parallel}
					\FOR{$g=1,\ldots,L$}
						\STATE Wait until core $q$ has finished $\mathcal{A}_{q,g-1}$ and the current time is at least $t_{g}$
						\IF{$q\in\mathcal{M}_{e}$}
							\STATE Schedule $\mathcal{A}_{q, g}$ by the Birkhoff-von Neumann decomposition~\cite{marcus1959diagonals, Qiu2015}
						\ELSIF{$q\in\mathcal{M}_{o}^{2}$}
							\STATE Schedule $\mathcal{A}_{q, g}$ by Sunflow~\cite{huang2016}
						\ELSE
							\STATE Schedule $\mathcal{A}_{q, g}$ by Reco-Sin~\cite{Tan2021}
						\ENDIF
					\ENDFOR
				\ENDFOR
   \end{algorithmic}
\label{AlgWC}
\end{algorithm}

\subsection{Analysis}
Let $T_{i}^{g}$ and $T_{j}^{g}$ denote the load of input port $i$ and output port $j$ on a core caused by the flows of group $g$ assigned to that core; again, the index of the core is omitted.

\begin{lem}\label{lem:groupload}
For every group $g$, every core, every $i \in \mathcal{I}$, and every $j \in \mathcal{J}$, we have $T_{i}^{g}, T_{j}^{g} \leq \min\left\{\tau, 4\right\} t_{g}$.
\end{lem}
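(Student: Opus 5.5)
The bound is the minimum of two separate bounds, and I will prove each one by itself. Both rest on Lemma~\ref{lem:groupLP} and the finer estimate $T+2P$ of Theorem~\ref{thm:IR}.

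For the bound $4t_{g}$, fix a group $g$. By Lemma~\ref{lem:groupLP}, the restriction of $\textbf{y}$ to $\mathcal{S}_{g}$ lies in $\mathcal{P}(\mathcal{G}_{g},\mathcal{S}_{g},2t_{g})$, so this polytope is nonempty. The same lemma gives $p_{ijkq}\leq t_{g}\leq 2t_{g}$ on $\mathcal{S}_{g}$, so requirement~(\ref{eq:pmax}) holds with $T=2t_{g}$. Algorithm~\ref{AlgIR} is therefore run with valid input. Theorem~\ref{thm:IR} then bounds the load of every input port and output port on every core by $T+2P$, where $P=\max_{e\in\mathcal{S}_{g}}p_{e}$.

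Here the coarse bound $3T=6t_{g}$ is too weak, and we must use $T+2P$. Lemma~\ref{lem:groupLP} gives $P\leq t_{g}$, so the load is at most $2t_{g}+2t_{g}=4t_{g}$. This holds for $T_{i}^{g}$ and $T_{j}^{g}$ on every core.

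For the bound $\tau t_{g}$, note that at most $\tau$ flows of $\mathcal{F}$ are incident to any port, so at most $\tau$ flows of $\mathcal{G}_{g}$ are. By Theorem~\ref{thm:IR}, each such flow is assigned to a core $q$ through an edge $((i,j,k),q)\in\mathcal{S}_{g}$. The definition of $\mathcal{S}_{g}$ gives $r_{k}+p_{ijkq}\leq t_{g}$, hence $p_{ijkq}\leq t_{g}$. Summing over at most $\tau$ flows gives $T_{i}^{g},T_{j}^{g}\leq\tau t_{g}$. Combining the two bounds yields $\min\{\tau,4\}t_{g}$.

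No step is a real obstacle. The only point needing care is to apply Theorem~\ref{thm:IR} with the capacity $T=2t_{g}$ and the transmission bound $P\leq t_{g}$ kept separate, as the remark after Theorem~\ref{thm:IR} anticipates. Using $3T$ instead would give only $6t_{g}$.
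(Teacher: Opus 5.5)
Your proposal is correct and follows the paper's proof essentially step for step. You check the input conditions of Algorithm~\ref{AlgIR} via Lemma~\ref{lem:groupLP} with $T=2t_{g}$, apply the finer bound $T+2P$ of Theorem~\ref{thm:IR} with $P\leq t_{g}$ to get $4t_{g}$, and get $\tau t_{g}$ because each of the at most $\tau$ incident flows is assigned through an edge of $\mathcal{S}_{g}$. The paper also re-derives the $T+2P$ estimate explicitly at the dropped constraint $c=(i,q)$, but that only unpacks Theorem~\ref{thm:IR} and is not a different route.
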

\begin{proof}
We present the argument for an input port $i$ on a core $q$; the argument for an output port is identical. Fix a group $g$ and let $c=(i,q)$.

\emph{The bound $4t_{g}$.} By Lemma~\ref{lem:groupLP}, the restriction of $\textbf{y}$ to $\mathcal{S}_{g}$ belongs to $\mathcal{P}(\mathcal{G}_{g},\mathcal{S}_{g},2t_{g})$, so the input requirement of Algorithm~\ref{AlgIR} is met with $T=2t_{g}$; condition~(\ref{eq:pmax}) also holds since $p_{e}\leq t_{g}\leq 2t_{g}$ for every $e\in\mathcal{S}_{g}$. The quantity $P$ of Theorem~\ref{thm:IR} therefore satisfies
\begin{equation*}
P=\max_{e\in\mathcal{S}_{g}}p_{e}\leq t_{g}.
\end{equation*}
We stress that this bound on $P$ is half of the capacity $T=2t_{g}$: the capacity is doubled because each flow of $\mathcal{G}_{g}$ carries only a fraction of at least $\frac{1}{2}$ of its assignment within the first $g$ intervals, but an admissible edge must still satisfy $r_{k}+p_{ijkq}\leq t_{g}$. Applying Theorem~\ref{thm:IR} to the run of Algorithm~\ref{AlgIR} for group $g$, the load of $c$ caused by the flows of $\mathcal{G}_{g}$ is at most
\begin{equation*}
T+2P\leq 2t_{g}+2t_{g}=4t_{g}.
\end{equation*}
In more detail, if $c$ is never dropped during the run for group $g$, its load is at most $2t_{g}$. Otherwise, let $\textbf{x}$ be the extreme point at the iteration in which $c$ is dropped and $\Lambda_{c}$ the load of $c$ at that moment. As in the proof of Theorem~\ref{thm:IR},
\begin{equation*}
T_{i}^{g}\leq\left(\Lambda_{c}+\sum_{e\in E(c)}p_{e}x_{e}\right)+\sum_{e\in E(c)}p_{e}(1-x_{e})\leq 2t_{g}+t_{g}\sum_{e\in E(c)}(1-x_{e})\leq 2t_{g}+2t_{g}=4t_{g},
\end{equation*}
where the first part is bounded by the residual capacity constraint~(\ref{RLP:b}) with right-hand side $2t_{g}-\Lambda_{c}$, and the second part uses $p_{e}\leq t_{g}$ and $\sum_{e\in E(c)}(1-x_{e})\leq 2$.

\emph{The bound $\tau t_{g}$.} At most $\tau$ flows are incident to port $i$. Each flow of $\mathcal{G}_{g}$ that is assigned to core $q$ is assigned through an edge of $\mathcal{S}_{g}$, so its transmission time on $q$ is at most $t_{g}$. Hence $T_{i}^{g}\leq\tau t_{g}$.

Combining the two bounds gives $T_{i}^{g}\leq\min\{\tau,4\}t_{g}$.
\end{proof}

Note that applying only the coarser bound $3T$ of Theorem~\ref{thm:IR} with $T=2t_{g}$ would give $6t_{g}$; the improvement to $4t_{g}$ comes from the fact that the deficiency part is charged at the transmission time $P\leq t_{g}$ rather than at the capacity $2t_{g}$.

Let $C_{\max}^{g}$ denote the largest time that a core spends on the flows of group $g$, measured from the moment the group is started on that core. For a core $q$, let $L_{q}^{g}$ be the largest load of a port on $q$ caused by the flows of $\mathcal{A}_{q,g}$. By Lemma~\ref{lem:groupload},
\begin{equation}\label{eq:Lqg}
L_{q}^{g}\leq\min\{\tau,4\}t_{g}\quad\text{for every } q\in\mathcal{M}.
\end{equation}

\begin{lem}\label{lem:groupms}
For every group $g$,
\begin{itemize}
\item if the cores are EPS, then $C_{\max}^{g}\leq \min\left\{\tau, 4\right\} t_{g}$;
\item if the cores are not-all-stop OCS, then $C_{\max}^{g}\leq 2 \min\left\{\tau, 4\right\} t_{g}$;
\item if the cores are all-stop OCS, then $C_{\max}^{g}\leq 4\min\left\{\tau, 4\right\}t_{g}$.
\end{itemize}
\end{lem}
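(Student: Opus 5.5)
The plan is to mirror the makespan analysis (Lemmas~\ref{lem:EPS}, \ref{lem:NAS}, \ref{lem:reconf}, and~\ref{lem:AS}), replacing the bound~(\ref{eq:Lq}) on $L_{q}$ by the bound~(\ref{eq:Lqg}) on $L_{q}^{g}$. The point is that, once group $g$ is started on core $q$, the scheduler runs only on the flow set $\mathcal{A}_{q,g}$. These flows have all been released by then, since $r_{k}+p_{ijkq}\leq t_{g}$ and the group starts no earlier than $t_{g}$. So the single-core guarantees of the Birkhoff-von Neumann schedule, Sunflow, and Reco-Sin apply verbatim to this instance, with time measured from the start of the group. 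Each per-core bound is then expressed through $L_{q}^{g}$, and we take the maximum over $q$.

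\emph{EPS case.} By Theorem~\ref{BvN2}, the Birkhoff-von Neumann schedule of $\mathcal{A}_{q,g}$ finishes within the largest port load, namely $L_{q}^{g}$. Together with~(\ref{eq:Lqg}) this gives $C_{\max}^{g}\leq\min\{\tau,4\}t_{g}$.

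\emph{Not-all-stop OCS case.} By the Sunflow analysis, the last flow of $\mathcal{A}_{q,g}$, say from $i$ to $j$, finishes within $T_{i}^{g}+T_{j}^{g}\leq 2L_{q}^{g}$ of the group start. This yields $2\min\{\tau,4\}t_{g}$.

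\emph{All-stop OCS case.} Apply~(\ref{eq:recosin}) to $\mathcal{A}_{q,g}$, with $\rho_{q}^{g}$ the largest amount of data at a port and $\Delta_{q}^{g}$ the largest number of incident flows, restricted to group $g$ on core $q$. This gives
\begin{equation*}
C_{\max}^{g}\leq 2\left(\frac{\rho_{q}^{g}}{r_{q}}+\Delta_{q}^{g}\delta_{q}\right).
\end{equation*}
Repeating the argument of Lemma~\ref{lem:reconf} for the flows of $\mathcal{A}_{q,g}$ gives $\rho_{q}^{g}/r_{q}\leq L_{q}^{g}$ and $\Delta_{q}^{g}\delta_{q}\leq L_{q}^{g}$. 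The argument uses only that each flow's transmission time on an OCS core is $d_{ijk}/r_{q}+\delta_{q}$, which dominates both its data term and one reconfiguration delay. Hence $C_{\max}^{g}\leq 4L_{q}^{g}\leq 4\min\{\tau,4\}t_{g}$.

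I expect little genuine obstacle. The step needing the most care is confirming that the black-box single-core guarantees hold for a group scheduled starting at an arbitrary time. This requires that no flow of the group is released after the group starts, which is exactly ensured by the admissibility condition defining $\mathcal{S}_{g}$ and the waiting rule in Algorithm~\ref{AlgWC}. It also requires that the per-group Reco-Sin bound~(\ref{eq:recosin}) counts only the flows of $\mathcal{A}_{q,g}$, so that Lemma~\ref{lem:reconf} transfers with $L_{q}^{g}$ in place of $L_{q}$.
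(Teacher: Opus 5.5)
Your proposal is correct and follows essentially the same route as the paper. Like the paper, you treat $\mathcal{A}_{q,g}$ as a standalone single-core instance whose flows are all released when the group starts. You then apply the Birkhoff-von Neumann bound, the Sunflow bound $T_{i}^{g}+T_{j}^{g}\leq 2L_{q}^{g}$, and the Reco-Sin bound with the argument of Lemma~\ref{lem:reconf} repeated for group $g$, and finish with $L_{q}^{g}\leq\min\{\tau,4\}t_{g}$ and a maximum over cores.
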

\begin{proof}
Fix a group $g$ and a core $q$. Once group $g$ is started on $q$, the core schedules exactly the flows of $\mathcal{A}_{q,g}$, all of which have been released (Algorithm~\ref{AlgWC}), and no flow of another group is transmitted on $q$ until $\mathcal{A}_{q,g}$ is finished. Hence the time spent on group $g$ is the makespan of the schedule computed for the flow set $\mathcal{A}_{q,g}$ alone, and we may apply the single-core analyses of Section~\ref{sec:Makespan} to this flow set, with $L_{q}$ replaced by $L_{q}^{g}$.

\emph{EPS cores.} By Theorem~\ref{BvN2}, the Birkhoff-von Neumann based schedule completes $\mathcal{A}_{q,g}$ within the largest port load, so the time spent is at most $L_{q}^{g}\leq\min\{\tau,4\}t_{g}$ by~(\ref{eq:Lqg}).

\emph{Not-all-stop OCS cores.} By the analysis of Sunflow~\cite{huang2016}, the last flow of $\mathcal{A}_{q,g}$, say from input $i$ to output $j$, finishes no later than the sum of the loads of $i$ and $j$ caused by $\mathcal{A}_{q,g}$. Hence the time spent is at most $T_{i}^{g}+T_{j}^{g}\leq 2L_{q}^{g}\leq 2\min\{\tau,4\}t_{g}$.

\emph{All-stop OCS cores.} Let $\rho_{q}^{g}$ be the largest amount of data that a port of $q$ sends or receives in $\mathcal{A}_{q,g}$, and let $\Delta_{q}^{g}$ be the largest number of flows of $\mathcal{A}_{q,g}$ incident to one port. By~(\ref{eq:recosin}) applied to $\mathcal{A}_{q,g}$, the time spent is at most $2\left(\frac{\rho_{q}^{g}}{r_{q}}+\Delta_{q}^{g}\delta_{q}\right)$. The proof of Lemma~\ref{lem:reconf} applies verbatim to the flow set $\mathcal{A}_{q,g}$: if $v$ is a port that sends or receives $\rho_{q}^{g}$ units of data and $A_{v}$ is the set of flows of $\mathcal{A}_{q,g}$ incident to $v$, then
\begin{equation*}
\frac{\rho_{q}^{g}}{r_{q}}=\sum_{(i,j,k)\in A_{v}}\frac{d_{ijk}}{r_{q}}\leq\sum_{(i,j,k)\in A_{v}}p_{ijkq}\leq L_{q}^{g},
\end{equation*}
and if $v'$ is a port with $\Delta_{q}^{g}$ incident flows of $\mathcal{A}_{q,g}$, forming the set $A_{v'}$, then, since each of these flows is charged one reconfiguration delay,
\begin{equation*}
\Delta_{q}^{g}\delta_{q}=\sum_{(i,j,k)\in A_{v'}}\delta_{q}\leq\sum_{(i,j,k)\in A_{v'}}p_{ijkq}\leq L_{q}^{g}.
\end{equation*}
Therefore the time spent is at most $2(L_{q}^{g}+L_{q}^{g})=4L_{q}^{g}\leq 4\min\{\tau,4\}t_{g}$.

Taking the maximum over all cores of the respective type proves the lemma.
\end{proof}

Let $R$ be the factor appearing in Lemma~\ref{lem:groupms}, i.e., $R=\min\{\tau,4\}$, $R=2\min\{\tau,4\}$, and $R=4\min\{\tau,4\}$ for EPS, not-all-stop OCS, and all-stop OCS, respectively. Note that $R\geq 1$.

\begin{lem}\label{lem:finish}
On every core, all flows of groups $1,\ldots,g$ are completed by time $2R\,t_{g}$.
\end{lem}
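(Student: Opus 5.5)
We prove the claim by induction on $g$, fixing a core $q$ and writing $F_{q}^{g}$ for the time at which core $q$ finishes the flows of $\mathcal{A}_{q,g}$. It is convenient to set $F_{q}^{0}=0$, which corresponds to empty group sets. We will show $F_{q}^{g}\leq 2R\,t_{g}$ for all $g\geq 0$. Since groups are processed in order on each core, this bound gives the lemma. A group that is empty on $q$ is treated as finishing as soon as it is started.

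For the inductive step we take $g\geq 1$ and assume $F_{q}^{g-1}\leq 2R\,t_{g-1}$. By Algorithm~\ref{AlgWC}, group $g$ is started on core $q$ at time
\begin{equation*}
S_{q}^{g}=\max\{F_{q}^{g-1},\,t_{g}\}.
\end{equation*}
By Lemma~\ref{lem:groupms}, group $g$ occupies core $q$ for at most $C_{\max}^{g}\leq R\,t_{g}$ after it starts, so $F_{q}^{g}\leq S_{q}^{g}+R\,t_{g}$. We split into two cases according to which term attains the maximum. If $S_{q}^{g}=t_{g}$, then $F_{q}^{g}\leq (1+R)t_{g}\leq 2R\,t_{g}$, using $R\geq 1$. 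Otherwise $S_{q}^{g}=F_{q}^{g-1}\leq 2R\,t_{g-1}=R\,t_{g}$, using $t_{g}=2t_{g-1}$, and therefore $F_{q}^{g}\leq 2R\,t_{g}$.

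The base case needs care. For $g=1$ we have $F_{q}^{0}=0$, so $S_{q}^{1}=t_{1}$ and the first case applies directly. Two further points must be checked. First, Lemma~\ref{lem:groupms} bounds the duration of the group even though it may start later than $t_{g}$. This holds because the single-core schedule of $\mathcal{A}_{q,g}$ depends only on that flow set, and all of its flows are released by $t_{g}\leq S_{q}^{g}$. Second, in a hybrid network the constant $R$ should be taken per core, or as the maximum over the architectures present, and the argument is unchanged. The only real obstacle is the geometric doubling $t_{g}=2t_{g-1}$, which absorbs the backlog from earlier groups. Everything else follows directly from Lemma~\ref{lem:groupms} and $R\geq1$.
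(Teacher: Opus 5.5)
Your proof is correct and is essentially the paper's argument: the recursion $F_{q}^{g}\leq\max\{F_{q}^{g-1},t_{g}\}+R\,t_{g}$ from Algorithm~\ref{AlgWC} and Lemma~\ref{lem:groupms}, closed by induction using $R\geq 1$ and $t_{g}=2t_{g-1}$. Your extra remarks, that groups on a core are processed in order and that $R$ is the largest factor in a hybrid network, match what the paper uses implicitly or states in the proof of Theorem~\ref{thm:wc}.
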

\begin{proof}
Let $E_{g}$ be the time at which the core finishes group $g$, with $E_{0}=0$. By Algorithm~\ref{AlgWC} and Lemma~\ref{lem:groupms}, $E_{g}\leq \max\{t_{g},E_{g-1}\}+R\,t_{g}$. We show $E_{g}\leq 2R\,t_{g}$ by induction on $g$. For $g=1$, $E_{1}\leq t_{1}+R\,t_{1}\leq 2R\,t_{1}$ because $R\geq 1$. For $g\geq 2$, the induction hypothesis gives $E_{g-1}\leq 2R\,t_{g-1}=R\,t_{g}$, so $\max\{t_{g},E_{g-1}\}\leq R\,t_{g}$ and $E_{g}\leq 2R\,t_{g}$.
\end{proof}

Let $\hat{C}_{k}$ denote the completion time of coflow $k$ in the schedule produced by Algorithm~\ref{AlgWC}.

\begin{thm}\label{thm:wc}
Algorithm~\ref{AlgWC} is a polynomial-time algorithm, and for every coflow $k$,
\begin{itemize}
\item if all cores are EPS, then $\hat{C}_{k}\leq 8\min\left\{\tau, 4\right\} C_{k}$;
\item if all cores are not-all-stop OCS, then $\hat{C}_{k}\leq 16 \min\left\{\tau, 4\right\} C_{k}$;
\item if all cores are all-stop OCS, then $\hat{C}_{k}\leq 32\min\left\{\tau, 4\right\} C_{k}$.
\end{itemize}
Consequently, the approximation ratio for the total weighted coflow completion time is $8\min\{\tau, 4\}$, $16\min\{\tau, 4\}$, and $32\min\{\tau, 4\}$ in these three architectures, respectively. In a hybrid network, the ratio is the largest of the ratios of the architectures that are present, and hence at most $32\min\{\tau,4\}\leq 128$.
\end{thm}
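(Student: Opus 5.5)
We prove the per-coflow bound $\hat{C}_{k}\leq 4R\,C_{k}$, where $R$ is the factor of Lemma~\ref{lem:groupms}. This gives $8\min\{\tau,4\}$, $16\min\{\tau,4\}$, and $32\min\{\tau,4\}$ for the three architectures. Summing with the weights and using that the optimal value of~(\ref{coflow:main:wc}) is a lower bound on the optimum then yields the ratios.

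Polynomial running time follows from three facts. The relaxation~(\ref{coflow:main:wc}) has polynomially many variables, since $L$ is polynomial. Each of the $L$ runs of Algorithm~\ref{AlgIR} is polynomial by Theorem~\ref{thm:IR}, whose hypotheses hold by Lemma~\ref{lem:groupLP}. The per-core scheduling subroutines are polynomial.

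The core step relates the group index of a flow to its LP completion time. Fix a flow $(i,j,k)$ with $g=g(i,j,k)=\max\{g_{1},g_{2}\}$. By Lemma~\ref{lem:finish}, the flow completes by $2R\,t_{g}$, so it suffices to show $t_{g}\leq 2C_{ijk}$. The two possible values of $g$ are handled separately.

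\emph{Case $g=g_{2}$.} By minimality of $g_{2}$, we have $C_{ijk}>t_{g-1}=t_{g}/2$ when $g\geq 2$. If $g=1$, then $t_{1}=2$, and $C_{ijk}\geq r_{k}+\min_{q}p_{ijkq}\geq 1$ by~(\ref{coflow:wc:b}) and the scaling $p\geq 1$. In both subcases $t_{g}\leq 2C_{ijk}$.

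\emph{Case $g=g_{1}$.} By minimality of $g_{1}$, the flow has at least $\tfrac12$ of its assignment mass in intervals $l\geq g$. Constraint~(\ref{coflow:wc:h}) then gives
\[
C_{ijk}\geq\sum_{l\geq g}t_{l-1}\sum_{q}x_{ijkql}\geq\tfrac12\,t_{g-1}=\tfrac14\,t_{g}.
\]
This only yields $t_{g}\leq 4C_{ijk}$.

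With the weaker estimate from the $g_{1}$ case, the flow bound becomes $\hat{C}_{ijk}\leq 2R\,t_{g}\leq 8R\,C_{ijk}$. That is off by a factor of $2$ from the claimed $4R$, so this bookkeeping is the step I expect to be the main obstacle. The claimed constants $8,16,32$ equal $2R$ times $4$ with the EPS value $R=\min\{\tau,4\}$. This suggests the intended accounting is $\hat{C}_{ijk}\leq 2R\,t_{g}$ together with $t_{g}\leq 4C_{ijk}$, where $R$ is the per-group factor from Lemma~\ref{lem:groupms}. I will therefore recheck whether Lemma~\ref{lem:finish} is meant with the EPS factor and the OCS factors absorbed into $R$. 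If it is, the chain $\hat{C}_{ijk}\leq 2R\cdot 4C_{ijk}=8R\,C_{ijk}$ matches exactly, with the ratio reading $8\min\{\tau,4\}$ times the architecture multiplier $1$, $2$, or $4$.

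Finally, take the maximum over the flows of coflow $k$ and use~(\ref{coflow:wc:j}) to get $\hat{C}_{k}\leq\max_{(i,j,k)}8R\,C_{ijk}\leq 8R\,C_{k}$. Multiply by $w_{k}$ and sum over $k$ to obtain the ratio.

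In a hybrid network, each core is governed by its own factor in Lemmas~\ref{lem:groupms} and~\ref{lem:finish}. The bound therefore holds with the largest $R$ among the architectures present, which is at most $4\cdot 4=16$, giving $8\cdot 16=128$.
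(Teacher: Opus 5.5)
Your argument is the paper's proof once you discard the target you set in your first sentence. Lemma~\ref{lem:finish} gives completion by $2R\,t_{g}$. The $g_{2}$ case gives $t_{g}\leq 2C_{ijk}$ and the $g_{1}$ case gives $t_{g}\leq 4C_{ijk}$, so $\hat{C}_{k}\leq 8R\,C_{k}$ by~(\ref{coflow:wc:j}).

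The factor-$2$ ``obstacle'' you flag comes from an arithmetic slip in that first sentence. With $R=\min\{\tau,4\}$, the bound $4R\,C_{k}$ is $4\min\{\tau,4\}$, not $8\min\{\tau,4\}$. The theorem asks for exactly $8R$. Here $R$ is by definition the architecture-dependent factor of Lemma~\ref{lem:groupms}, which already includes the OCS multipliers $2$ and $4$. So $8R$ is precisely $8\min\{\tau,4\}$, $16\min\{\tau,4\}$, and $32\min\{\tau,4\}$.

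The right intermediate target is therefore $t_{g}\leq 4C_{ijk}$, not $t_{g}\leq 2C_{ijk}$. Your $g_{1}$ estimate already delivers it, and no recheck of Lemma~\ref{lem:finish} is needed. Bounding each flow by $8R\,C_{ijk}$ and taking the maximum, rather than fixing the flow of largest group index as the paper does, is an equivalent variant.
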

\begin{proof}
The linear program~(\ref{coflow:main:wc}) has polynomially many variables and constraints, since $L$ is polynomial in the input size, and Algorithm~\ref{AlgIR} runs in polynomial time by Theorem~\ref{thm:IR}.

Fix a coflow $k$, and let $(i,j,k)$ be a flow of $k$ with the largest group index $g=g(i,j,k)$. By Lemma~\ref{lem:finish}, every flow of $k$ is completed by time $2R\,t_{g}$, so $\hat{C}_{k}\leq 2R\,t_{g}$. We now show $t_{g}\leq 4C_{ijk}$ and distinguish two cases.

\textbf{Case 1: $g=g_{2}$.} If $g=1$, then $t_{g}=2\leq 2C_{ijk}$, since $C_{ijk}\geq r_{k}+\min_{q}p_{ijkq}\geq 1$ by~(\ref{coflow:wc:a}) and~(\ref{coflow:wc:b}). Otherwise, the minimality of $g_{2}$ gives $t_{g-1}<C_{ijk}$, and hence $t_{g}=2t_{g-1}< 2C_{ijk}$.

\textbf{Case 2: $g=g_{1}$.} By the minimality of $g_{1}$, $\sum_{l=1}^{g-1}\sum_{q}x_{ijkql}<\frac{1}{2}$, and therefore $\sum_{l=g}^{L}\sum_{q}x_{ijkql}>\frac{1}{2}$ by~(\ref{coflow:wc:a}). Using~(\ref{coflow:wc:h}), we obtain
\begin{equation*}
\frac{t_{g-1}}{2} \leq t_{g-1}\sum_{l=g}^{L}\sum_{q\in \mathcal{M}} x_{ijkql}
\leq \sum_{l=g}^{L}\sum_{q\in \mathcal{M}} t_{l-1}\, x_{ijkql}
\leq \sum_{l=1}^{L}\sum_{q\in \mathcal{M}} t_{l-1}\, x_{ijkql}
\leq C_{ijk},
\end{equation*}
so $t_{g}=2t_{g-1}\leq 4C_{ijk}$.

In both cases $t_{g}\leq 4C_{ijk}\leq 4C_{k}$ by~(\ref{coflow:wc:j}), and thus $\hat{C}_{k}\leq 8R\,C_{k}$. Substituting the three values of $R$ proves the per-coflow bounds. Multiplying by $w_{k}$ and summing over $k$ gives $\sum_{k}w_{k}\hat{C}_{k}\leq 8R\sum_{k}w_{k}C_{k}$, and the right-hand side is at most $8R$ times the optimum, because~(\ref{coflow:main:wc}) is a relaxation. In a hybrid network, each core satisfies Lemma~\ref{lem:groupms} with the factor of its own architecture; taking $R$ as the largest factor among the architectures present, Lemma~\ref{lem:finish} and the argument above hold verbatim.
\end{proof}

As in Remark~\ref{rem:combine}, the guarantees of Theorem~\ref{thm:wc} and those of~\cite{chen2025nonsplitting} can be combined by computing both schedules and keeping the one with the smaller total weighted completion time.

\section{Concluding Remarks}\label{sec:Conclusion}
This paper revisited non-splitting coflow scheduling in heterogeneous parallel networks composed of EPS, not-all-stop OCS, and all-stop OCS cores. The previous algorithms for this problem~\cite{chen2025nonsplitting} have approximation ratios that grow linearly with the number of cores $m$ or with the maximum flow degree $\tau$, because rounding each flow to its largest fractional core may multiply the load of a port by $m$. We replaced this step by an iterative relaxation rounding that controls the input-port and the output-port capacity constraints at the same time, and we proved that the load of every port on every core is at most $\min\{\tau,3\}T^{*}$. For all-stop OCS, we additionally showed that the number of reconfigurations at a port is controlled by its load, which removes the additive $\tau$ term caused by reconfiguration delays. For minimizing the makespan, the resulting approximation ratios are $\min\{\tau,3\}$, $2\min\{\tau,3\}$, and $4\min\{\tau,3\}$ for pure EPS, pure not-all-stop OCS, and pure all-stop OCS networks. For minimizing the total weighted coflow completion time, applying the same rounding within every interval of the interval-indexed relaxation yields approximation ratios of $8\min\{\tau,4\}$, $16\min\{\tau,4\}$, and $32\min\{\tau,4\}$. In a hybrid network, the guarantee is determined by the weakest architecture present, so the overall ratios are at most $12$ and $128$, respectively. To the best of our knowledge, these are the first approximation ratios for non-splitting coflow scheduling in heterogeneous parallel networks that do not depend on the number of network cores.

Several questions remain open. First, Proposition~\ref{prop:gap} shows that the integrality gap of LP($T$) is at least $2$, while our rounding achieves a port-load bound of $3T^{*}$; closing this gap, for example with a refined rule for dropping constraints or with a strengthened relaxation, would directly improve all of our ratios. Second, the constants for the total weighted completion time come from several sources, including the group threshold $\frac{1}{2}$, the geometric interval length, and the factor $2$ in the capacity of each group, and it would be interesting to reduce them, for example by randomizing the interval boundaries. Third, the schedules on individual OCS cores lose a factor of $2$ or $4$ relative to the port loads; designing assignment and scheduling procedures that are analyzed jointly may reduce this loss. Finally, extending the results to online settings and to coflows with precedence constraints is a natural direction for future work.

\section*{Declaration of Generative AI Use}
During the preparation of this work, the author used generative AI tools to assist with language editing, literature organization, and discussions of parts of the algorithmic analysis. All research ideas, algorithm designs, and proofs were checked and verified by the author, who takes full responsibility for the content of this paper.

\appendix
\renewcommand{\thesection}{A}\setcounter{section}{1}
\section*{Appendix: Birkhoff-von Neumann Decomposition}\label{sec:BnV}
This appendix recalls why the flows assigned to a single EPS core can be completed within the largest port load of that core. Since only one core is involved, we drop the core index and describe the assigned flows by an $N\times N$ matrix $D=\left(d_{i,j}\right)_{i,j=1}^{N}$ of transmission times, and we define its load as
\begin{equation*}
\rho(D)=\max \left\{\max_{i}\sum_{j'=1}^{N}d_{i,j'},\ \max_{j}\sum_{i'=1}^{N}d_{i',j}\right\}.
\end{equation*}
The procedure relies on the classical theorem of Birkhoff and von Neumann~\cite{marcus1959diagonals}.

\begin{thm}\label{BvN}
\cite{marcus1959diagonals} \textbf{(Birkhoff-von Neumann)} Every doubly stochastic matrix $A\in \mathbb{R}^{N\times N}$ can be written as $A=\sum_{i=1}^{h} c_{i}P_{i}$, where each $P_{i}$ is a permutation matrix, $c_{i}\in (0, 1]$, $\sum_{i=1}^{h} c_{i}=1$, and $h\leq N^2-2N+2$.
\end{thm}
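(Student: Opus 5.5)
We prove the Birkhoff--von Neumann theorem by induction on the number of nonzero entries of $A$. The argument alternates two steps: extract a permutation matrix from the support of $A$, then subtract as much of it as possible.

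\emph{Finding a permutation in the support.} Consider the bipartite graph $G_A$ whose rows and columns are the vertices and whose edges are the pairs $(i,j)$ with $a_{ij}>0$. We claim $G_A$ has a perfect matching, and we verify Hall's condition. Take a set $X$ of rows and let $N(X)$ be its set of neighbouring columns. Summing the entries of $A$ in the rows of $X$ gives $|X|$. All of these entries lie in the columns of $N(X)$, whose total sum is $|N(X)|$. Since the entries are nonnegative, $|X|\le |N(X)|$. Hence there is a permutation matrix $P_1$ whose support is contained in the support of $A$.

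\emph{Peeling it off.} Set $c_1=\min\{a_{ij} : (P_1)_{ij}=1\}$, so $c_1\in(0,1]$. If $c_1=1$, every row of $A$ already contains an entry equal to $1$, so $A=P_1$ and we are done. Otherwise, let
\[
A'=\frac{A-c_1P_1}{1-c_1}.
\]
This matrix is nonnegative and every row and column sums to $1$, so it is doubly stochastic. It also has at least one fewer nonzero entry than $A$. Applying the induction hypothesis to $A'$ and multiplying its decomposition by $1-c_1$ gives $A=c_1P_1+\sum_{i\ge 2} c_iP_i$. The coefficients satisfy $c_i\in(0,1]$ and $\sum_i c_i=c_1+(1-c_1)=1$. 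The process terminates because the support strictly shrinks at each step.

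\emph{Counting terms.} The main obstacle is the bound $h\le N^2-2N+2$. The naive count of at most $N^2$ steps is not enough, so we need a dimension argument. Doubly stochastic matrices lie in the affine space $\{A : \text{all row and column sums equal } 1\}$. This space has dimension $(N-1)^2$, because among the $2N$ linear constraints exactly one is redundant. Define $\Phi(A)$ to be the dimension of the face of the Birkhoff polytope consisting of doubly stochastic matrices whose support is contained in that of $A$; then $\Phi(A)\le (N-1)^2$. Each peeling step zeroes at least one entry that was positive in $A$. The face is defined by fixing certain entries to zero, so this step moves to a proper subface and $\Phi$ decreases by at least one. When $\Phi=0$ the current matrix is a vertex, that is, a permutation matrix, and one final term finishes the decomposition. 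Therefore
\[
h\le (N-1)^2+1=N^2-2N+2.
\]
Alternatively, we can invoke Carathéodory's theorem directly. The Birkhoff polytope has dimension $(N-1)^2$, and its vertices are the permutation matrices by the integrality shown through the Hall argument above. Hence every point is a convex combination of at most $(N-1)^2+1$ vertices.
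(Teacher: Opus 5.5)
The paper does not prove this statement. It cites Marcus and Ree~\cite{marcus1959diagonals} and uses the result only to justify the second loop of Algorithm~\ref{BvN_alg}.

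Your argument is a correct, self-contained proof:
\begin{itemize}
\item The Hall computation is right.
\item The rescaled remainder $A'$ is doubly stochastic with strictly smaller support.
\item $\Phi$ strictly drops: $A\in F(A)\setminus F(A')$, so $F(A')$ is a proper face of $F(A)$.
\end{itemize}

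Two details deserve a sentence each.
\begin{itemize}
\item At $\Phi=0$ you need not quote the vertex characterization of the Birkhoff polytope. The permutation matrix produced by your Hall step lies in the singleton face $\{A\}$, so it equals $A$.
\item In the Carath\'eodory alternative, the Hall step alone does not show that the polytope is the convex hull of the permutation matrices. It is the Hall step together with the peeling that shows this.
\end{itemize}

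Your route also says something about the paper's own procedure, which the bare citation does not. Your peeling step is exactly one iteration of the second loop of Algorithm~\ref{BvN_alg}, up to scaling by the common row sum. So your Hall argument guarantees the perfect matching that loop needs at every iteration. Your face-dimension potential further shows that every run of that loop ends after at most $N^2-2N+2$ matchings, whichever matchings it chooses. The paper's observation that each iteration zeroes an entry only gives $N^2$. The Carath\'eodory version, by contrast, is purely existential and says nothing about the greedy extraction.
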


Algorithm~\ref{BvN_alg}, due to Qiu \textit{et al.}~\cite{Qiu2015}, works in two phases. The first phase repeatedly increases an entry lying in a row and a column whose sums are below $\rho(D)$, until every row sum and every column sum of the resulting matrix $\widetilde{D}\geq D$ equals $\rho(D)$. Then $\widetilde{D}/\rho(D)$ is doubly stochastic, and by Theorem~\ref{BvN}, $\widetilde{D}=\sum_{u}q_{u}\Pi_{u}$ for permutation matrices $\Pi_{u}$ and positive coefficients with $\sum_{u}q_{u}=\rho(D)$. The second phase extracts these permutation matrices one at a time: it finds a perfect matching in the bipartite graph formed by the positive entries of $\widetilde{D}$, transmits along the matched input-output pairs for the smallest matched entry $q$ time units, and subtracts $q$ times the matching from $\widetilde{D}$.

\begin{algorithm}
\caption{Birkhoff-von Neumann based schedule~\cite{Qiu2015}}
    \begin{algorithmic}[1]
				\STATE $\widetilde{D}\leftarrow D$
				\WHILE{some row sum or column sum of $\widetilde{D}$ is smaller than $\rho(D)$}
						\STATE Let $i^{*}$ be a row with the smallest row sum and $j^{*}$ a column with the smallest column sum of $\widetilde{D}$
						\STATE $p\leftarrow\rho(D)-\max\left\{\sum_{j'}\widetilde{d}_{i^{*},j'},\ \sum_{i'}\widetilde{d}_{i',j^{*}}\right\}$
						\STATE $\widetilde{d}_{i^{*},j^{*}}\leftarrow\widetilde{d}_{i^{*},j^{*}}+p$
				\ENDWHILE
				\WHILE{$\widetilde{D}\neq 0$}
						\STATE Find a perfect matching $M$ in the bipartite graph with an edge $(i,j)$ whenever $\widetilde{d}_{i,j}>0$
						\STATE $q\leftarrow\min\{\widetilde{d}_{i,j}\mid (i,j)\in M\}$
						\STATE Transmit from $i$ to $j$ for $q$ time units for every $(i,j)\in M$ with remaining demand
						\STATE $\widetilde{d}_{i,j}\leftarrow\widetilde{d}_{i,j}-q$ for every $(i,j)\in M$
				\ENDWHILE
   \end{algorithmic}
\label{BvN_alg}
\end{algorithm}

In each iteration of the first loop, at least one row or column reaches the sum $\rho(D)$, and in each iteration of the second loop, at least one positive entry becomes zero; hence both loops terminate after polynomially many iterations. The total transmission time equals $\sum_{u}q_{u}=\rho(D)$, which gives the following theorem.

\begin{thm}\label{BvN2}
Algorithm~\ref{BvN_alg} runs in polynomial time and completes all flows described by $D$ within $\rho(D)$ time units.
\end{thm}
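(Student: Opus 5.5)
The statement at the end is Theorem~\ref{BvN2}. The plan is to verify the two claims separately: polynomial running time, and completion within $\rho(D)$. Correctness of the padding phase will be established first, and then the extraction phase.

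\emph{Phase one (padding).}
\begin{itemize}
\item The invariant is that every row sum and every column sum of $\widetilde{D}$ stays at most $\rho(D)$. It holds initially by the definition of $\rho(D)$.
\item In an iteration, $p$ is $\rho(D)$ minus the larger of the row sum of $i^{*}$ and the column sum of $j^{*}$. Adding $p$ to $\widetilde d_{i^*,j^*}$ therefore keeps both sums at most $\rho(D)$, and one of them becomes exactly $\rho(D)$.
\item We must check $p>0$ while the loop runs. Every entry of $\widetilde D$ is counted once among the rows and once among the columns, so the total of the row sums equals the total of the column sums. If some row sum is below $\rho(D)$, the total is below $N\rho(D)$, so some column sum is also below $\rho(D)$. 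Hence the minimum row sum and the minimum column sum are both below $\rho(D)$, which gives $p>0$.
\item Each iteration saturates a new row or column, and saturated lines remain saturated. So the phase ends after at most $2N$ iterations with every line sum equal to $\rho(D)$.
\item Also $\widetilde D\ge D$ entrywise, since entries only increase.
\end{itemize}

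\emph{Phase two (extraction).}
\begin{itemize}
\item The invariant is that $\widetilde D$ is a nonnegative matrix with all line sums equal to a common value $s$, where $s$ starts at $\rho(D)$.
\item If $s>0$, then $\widetilde D/s$ is doubly stochastic. By Theorem~\ref{BvN}, $\widetilde D/s$ is a convex combination of permutation matrices, so some permutation matrix is supported on the positive entries. Hence a perfect matching $M$ exists; equivalently this follows from Hall's theorem applied to the support.
\item Subtracting $q$ along $M$ lowers every line sum by exactly $q$, so the invariant is preserved with $s$ replaced by $s-q$. The choice $q=\min\{\widetilde d_{i,j}\mid (i,j)\in M\}$ keeps all entries nonnegative and zeroes at least one positive entry.
\item Entries never become positive again, so there are at most $N^2$ iterations. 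Each one is a bipartite matching computation, which takes polynomial time.
\item The $q$'s used sum to the initial $s=\rho(D)$, so the total transmission time is $\rho(D)$.
\end{itemize}

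\emph{Serving the real flows.} During an iteration, each input and each output is used by exactly one pair of $M$, so the port constraints hold. Pair $(i,j)$ is served for a total of $\widetilde d_{i,j}$ time units over the whole phase. Since $\widetilde d_{i,j}\ge d_{i,j}$, flow $(i,j)$ is completely sent, with the padding time simply left idle. So all flows finish by time $\rho(D)$.

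The main obstacle is the termination argument for phase one. The key point is that the minimum row sum and the minimum column sum are both deficient at the same time, which follows from the equal-total argument above. Everything else reduces directly to Theorem~\ref{BvN}.
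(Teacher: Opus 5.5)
Your proof is correct and follows the same route as the paper. In both, each iteration of the padding loop saturates a new row or column, each iteration of the extraction loop zeroes a positive entry, and the extracted durations sum to $\rho(D)$; you also make explicit details the paper leaves implicit, namely $p>0$ via the equal-totals argument, the existence of a perfect matching at every extraction step via the equal-line-sum invariant, and full service of each real flow from $\widetilde{D}\geq D$.
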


\end{document}